\newtheorem{theorem}{Theorem}
\newtheorem{definition}{Definition}
\begin{document}

\title{Social Network Analysis Using Coordination Games}

\author{Radhika Arava\footnote{Nanyang Technological University}}
%\\ email:radhika.arava@gmail.com} { email:pradeepv@smu.edu.sg} }

\maketitle

\begin{abstract}
Communities typically capture homophily as people of the same community share many common features. This paper is motivated by the problem of community detection in social networks, as it can help improve our understanding of the network topology and the spread of information. Given the selfish nature of humans to align with like-minded people, we employ game theoretic models and algorithms to detect communities in this paper. Specifically, we employ coordination games to represent interactions between individuals in a social network. We represent the problem of community detection as a graph coordination game. We provide a novel and scalable two phased approach to compute an accurate overlapping community structure in the given network. We evaluate our algorithm against the best existing methods for community detection and show that our algorithm improves significantly on benchmark networks (real and synthetic) with respect to standard \textit{normalised mutual information} measure. 
\end{abstract}

\section{Introduction}

In a social network, community structure is important as it helps in understanding the spread of information, rumor, fashion, or a video in that network. Future popularity of a meme can be predicted by quantifying its early spreading pattern in terms of its community concentration. For example, the more communities a meme permeates, the more viral it is. Thus, information about community structure gives predictive knowledge about what information will spread widely \cite{weng2013virality}. Understanding the community structure of metabolic or biological networks can help in identifying the cure for complex diseases \cite{guimera_functional_2005}. In large protein-protein interaction data sets, protein-protein interactions enable proteins to act concertedly to carry out their functions. Community structure on protein interaction networks should show potentially good candidates for the modules of proteins responsible for such functions.

In real-world networks, each vertex is likely to be a part of more than one community. For example, in social networks like Facebook, Google+, there can be overlap in user's high-school friends' circle and his university friends' circle. Similarly, in a collaboration network, a researcher can be a part of two different collaboration groups due to varied research interests and due to his different affiliations. In a biological network on genes, a gene can be responsible for more than one disease. So, it is common to observe overlap in the community structures of several real world networks. 

Most of the existing overlapping community detection algorithms fall into the following categories: (a) inference based approach \cite{gopalan2013efficient} (b) label propagation \cite{gregory_finding_2010}, \cite{xie2011slpa} (c) information theory \cite{esquivel2011compression} (d) game theory based algorithms \cite{chen2010game}, \cite{bei2013trial} (e) clique percolation methods \cite{palla_uncovering_2005}. The clique percolation algorithms like \cite{palla_uncovering_2005} are not scalable to large graphs.  Gopalan {\em et al} \cite{gopalan2013efficient} has shown that the random walk approach \cite{esquivel2011compression} did not detect the overlapping community structure accurately compared to the other algorithms considered in this paper.  The inference based approach  \cite{gopalan2013efficient} and label propagation based methods \cite{gregory_finding_2010}, \cite{xie2011slpa} are scalable and are quite fast for large networks, but are not as efficient in detecting overlapping communities as our algorithm, as shown in this paper on several different networks. Hoefer et al \cite{hoefer2007cost} proposed max-agree games, as a kind of clustering games which are similar to the graph coordination games considered in this paper. Max-agree games differ from our games such that it benefits the players to play different strategies whenever they are not connected.

There are a few game theory based overlapping community detection algorithms in the literature, and they are not scalable for large networks. The algorithm by Chen {\em et al} \cite{chen2010game} uses game theory and tries to optimize a personalized modularity function in every step using local search and its worst case running time is $O(m^2)$ ($m$ is the number of edges) time for unweighted graphs. It has been shown by Xie {\em et al} \cite{xie_overlapping_2013} that this game theory based algorithm in \cite{chen2010game} does not detect overlapping community structures accurately compared to the other existing algorithms. Bei {\em et al} \cite{bei2013trial} uses a trial and error approach along with coordination games from game theory to detect disjoint communities in a given network and is not scalable and not efficient compared to our algorithm. Similarly, Narayanam {\em et al}  \cite{narayanam2012game} used game theory to detect only disjoint communities in a network and the algorithm is also not scalable. Though our algorithm solves coordination games to detect the overlapping community structure, it is scalable to large networks and is run on 500,000 vertex networks for varying parameters. It is also found to be efficient over all the above stated algorithms on several networks for varying parameters.

%Hoefer et al \cite{hoefer2007cost} proposed max-agree games, as a kind of clustering games which are similar to the graph coordination games considered in this paper. Max-agree games differ from our games such that it benefits the players to play different strategies whenever they are not connected.

In this paper, we provide a novel, scalable two-phase algorithm to compute the overlapping community structure of a network. We evaluate the algorithm \textit{NashOverlap} against the current state of the art and we find that our algorithm works far better than the best existing methods on the standard LFR benchmark networks, generated by varying all the parameters. Our algorithm is also run on small and large real world graphs like karate, dolphin and computer science bibliography database. Normalized Mutual Information is used to evaluate the quality of detected community structures against the ground-truth. Our algorithm is also evaluated against the current state of the art of the disjoint community detection algorithms and we found its performance to be as good as the other algorithms. It is the first game theory based overlapping community detection algorithm which does not optimize the modularity function and can still detect an accurate community structure for a given social network. Though our algorithm solves coordination games, it is scalable to large networks unlike the other game theoretic based algorithms.

\section{Background}
\textbf{Weighted potential games} were first introduced by Monderer and Shapley in their landmark paper~\cite{monderer1996potential}.  A game with a finite number of strategic players say $n$ is a weighted potential game if it admits a weighted potential function. 

Let $V$ be the set of all players. Let $S_i$ be the set of strategies of player $i$ and the utility function of player $i$ is $u_i:S\mapsto \mathbb{R}$ where $S=S_1 \times S_2 \times \ldots \times S_n$. $S_{-i}$ is the set of all possible strategies of all players other than player $i$. In a strategy profile $s=(s_i, s_{-i})$, $s_i$ is a strategy of player $i$ and $s_{-i}$ constitutes a strategy profile of all the players other than player $i$. Let $w=(w_i)_{i\in V}$ be a vector of positive real numbers called weights.  A function $\Phi:S\mapsto \mathbb{R} $ is a weighted potential function if for every $i \in V$, and for every $s_{-i} \in S_{-i}$, and for every $s_i, s'_i \in S_i$,
\begin{equation}
 \Phi(s'_i, s_{-i}) - \Phi(s_i, s_{-i}) = w_i \cdot (u_i(s'_i, s_{-i}) - u_i(s_i, s_{-i}))
\end{equation}

A strategy profile is called a (pure) {\em Nash Equilibrium} (local optima) if no player $i$ can improve his utility by unilaterally changing his own strategy (i.e., adopting another strategy). Every weighted potential game has a pure Nash equilibrium \cite{monderer1996potential}.\\

\noindent \textbf{Coordination games} are a class of games in which players have to choose the same strategies in order to maximize their utility. 
Typically, a coordination game can have multiple strategy assignments to all the vertices which are pure strategy Nash equilibria. 

A strategy-assignment is Pareto-efficient, if there exists no other strategy assignment in which a player can improve its utility without making other player's utility worse. Pareto-efficient assignment is a Nash equilibrium which is favored by all the players, referred to as focal point by Thomas Schelling. However, every coordination game is not guaranteed to have a pareto-efficient strategy assignment.\\

\noindent A \textbf{local search problem} is defined by  
\begin{itemize}
\item a set of problem instances $\Pi$
\item for every problem instance $I \in \Pi$
\begin{itemize}
\item a set $F(I)$ of feasible solutions
\item a value function (also called as potential function) $\Phi : F(I) \mapsto \mathbb{R}$ that maps every feasible solution $S \in F(I)$ to some value $\Phi(S)$.
\item for every feasible solution $S \in F(I)$, a neighborhood $N(S,I) \subseteq F(I)$ of $S$.
\end{itemize}
\end{itemize}

Neighborhood $N(S,I)$ of a problem $I$ is defined as all feasible solutions obtained when only one player is allowed to change its strategy. To solve a local search problem is to find a feasible solution $S \subset F(I)$ that is a local optimum, i.e., $\Phi(S) \leq \Phi(S')$ for every $S' \in N(S,I)$. 

The problem of solving a coordination game (which is also a potential game) is equivalent to the problem of solving an equivalent local search problem~\cite{kleinberg2006algorithm}. It means, the set of Nash equilibria of coordination game is the same as the set of local optima for the local search optimization problem.

\section{Community Detection Problem}

In a social network $G=(V,E,w)$, $V$ is the set of vertices and $E$ is the set of undirected edges that represents relationships between vertices in the network and $w:E \mapsto \mathbb{R} $ is the weight function on edges. If the graph is unweighted, we assume that each edge has unit weight. A community in a social network is a non-empty connected subset of vertices with denser connections within themselves than with the rest of the network. Formally, if $C_i \subseteq V$ denotes a community $i$, then our goal is to identify the community structure $\Gamma = \{C_1,C_2,\ldots, C_m\}$, where $\bigcup_i C_i = V$.  

\subsection{Community Detection as a Graph Coordination game}

As indicated earlier, people have selfish motives and prefer to align with like-minded people and hence choose the same strategies as that of their closely bonded friends. According to Schelling segregation model \cite{schelling1971dynamic}, these micro-motives of people to bond with people of similar interests and choose strategies accordingly leads to a macro-behavior emerging in the network, in the form of a community structure. Therefore, to account for selfish motives of people to align with the like-minded, we represent the problem of identifying community structure as a game where the set of actions available to a player is the set of all communities. 

We consider a game for each set of connected players. Each player has the set of all communities in the network as its set of strategies. The utility for each player, when they play the same strategy (choose the same community) as that of his neighbor in the coordination game, is proportional to their tie-strength \cite{granovetter1973} (formally defined later) and the utility when they play different strategies is zero. Extending on this utility definition, the utility of a player playing the same strategy as that of a subset of his friends (neighbors in graph) is proportional to the sum of his tie-strengths with each of them. This is a \textbf{graph coordination game}.

%A player can choose to be in multiple communities. When a player is in multiple communities, his utility is the sum of the utilities due to each of those communities. 

\section{Approach}

We design a novel and scalable approximation approach in two phases to compute a community structure of the network by solving for a particular Nash equilibrium of the formulated graph coordination game. 

In the first phase, we solve $k$ graph coordination games independently by formulating the local search version $\zeta_1$ of each game in the following way. Each vertex has a set $S$ of $r$ strategies. Assign a strategy picked uniformly from set $S$ of $r$ strategies for each vertex $v \in V$. Pick a uniform random ordering of vertices and each vertex gets its turn to maximize its utility according to this ordering. In its turn, the vertex $i$ picks the strategy $s_i \in S$, whichever gives the maximum utility. This is repeated with the same ordering of vertices until no vertex can increase its utility by changing its strategy.

We prove that this graph coordination game $\zeta_1$ is a potential game. Any Nash equilibrium of a $r$-strategy coordination game results in at most $r$ communities. The local optimum of $\zeta_1$ of any two independent games depends on the vertex-ordering chosen and the initialization of the strategies to the vertices. For uniform initialization of strategies to vertices and any vertex ordering in $\zeta_1$, we have already seen that when the network has a clear community structure, its local optimum detects the community structure with a high probability.

To deal with the networks whose community structure is fuzzy, we compute the proportion of $k$ games in which an edge chooses the same strategy to estimate their probability of choosing the same community.

From these $k$ games, we compute the edge-closeness value for each edge (i.e., probability that an edge can be in the same community) and an intermediate partition of the network formed by considering only edges with a reasonably high edge-closeness value. So, we consider the edges with edge-closeness value $>$ 0.95, as the edges which are most probably the community-edges (edges with both its vertices in a community). Using these edges, we construct the set of all the connected components and call it as intermediate partition. So, the connected components in the intermediate partition correspond to the communities in which all the pairs follow the same strategy, in almost all the games.

In the intermediate partition, the cut-edges are the edges where the vertices are in different communities and community-edges are the edges where both vertices are in the same community. Since, we identify an edge as community-edge only if it survives as a community-edge in more than $95\%$ of the graph coordination games in the first phase, an actual cut-edge is more likely to be identified as a cut-edge in the intermediate partition. However, an actual community-edge can be mistakenly identified as a cut-edge, depending upon the fuzziness of the network. So, using the intermediate partition and edge-closeness values computed in the first phase, the second phase takes care of such mistakenly identified edges and outputs an accurate and stable overlapping community structure.

We show that the problem of computing the local optimum for all these games is NP-Hard using the ideas from \cite{kleinberg2006algorithm}. However, we can allow for minor changes in the game parameters and show that we can compute a stable overlapping community structure in linear time (linear in number of edges).

The overall algorithm, referred to as $\textit{NashOverlap}$ is best illustrated in two phases as shown in the Figure \ref{fig:algo}. 

\begin{figure}[ht]
\includegraphics[width=\textwidth]{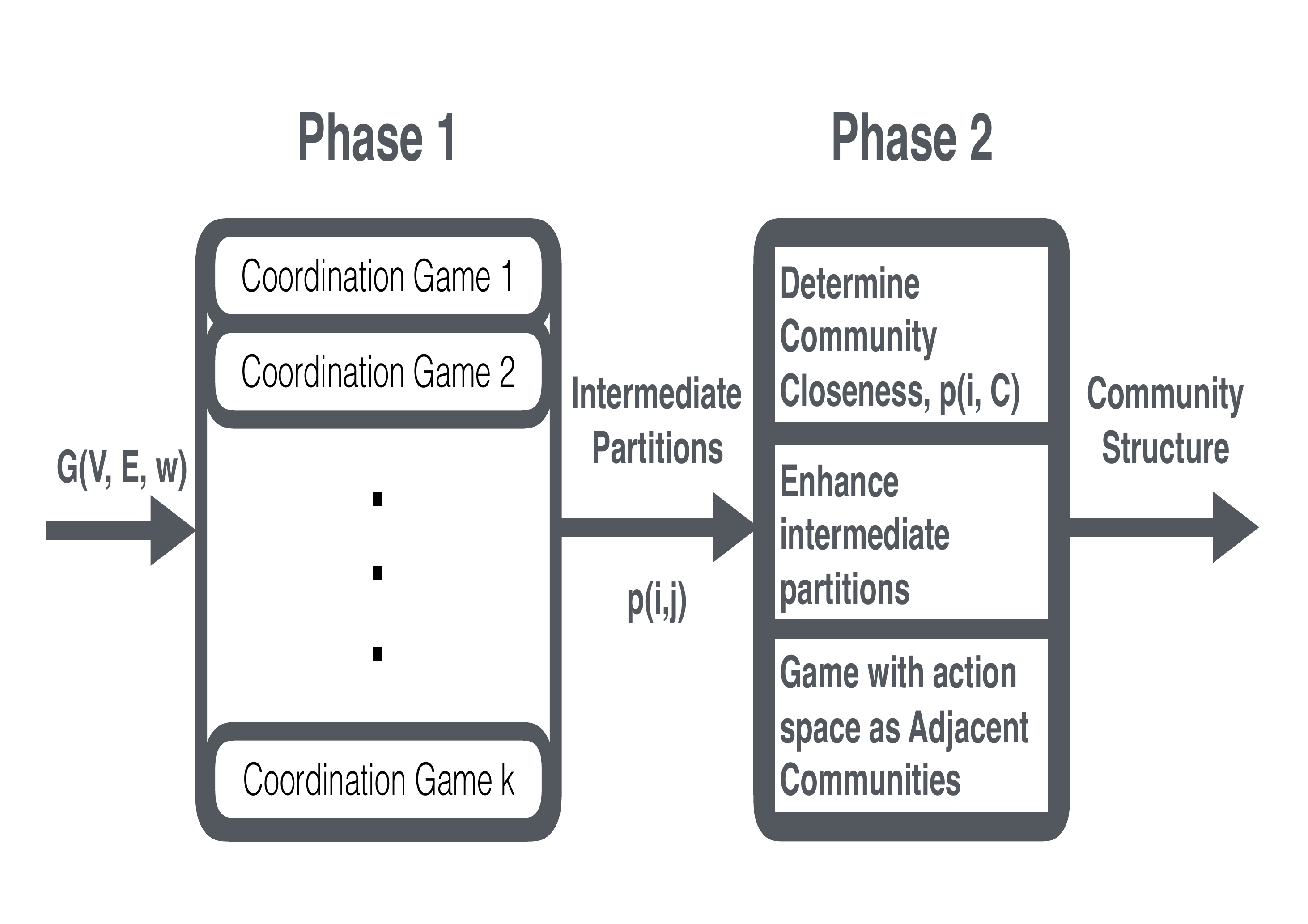}
\caption{Algorithm: NashOverlap}
\label{fig:algo}
\end{figure}

\subsection{First Phase}

In this phase, we identify edge-closeness values and an intermediate partition of the graph using the concept of tie-strength. The concept of tie-strength was first introduced by Granovetter in his landmark paper \cite{granovetter1973} and their hypothesis is that stronger the tie-strength between two people, the larger their number of common friends and their weights. Using that hypothesis, we define the tie-strength of any pair of friends in our model which measures how tightly the pair is bonded to each other. Specifically, tie-strength $t(i,j)$ is the sum of the weights due to all common friends $k$ of $i$ and $j$ plus the edge weight $w(i,j)$. Given a weighted network $G(V, E, w)$, tie-strength $t(i,j)$ of an edge $(i,j)$ is defined as:
\begin{equation}
t(i,j) = w(i,j) + \displaystyle \sum_{\substack{{k:(i,k) \in E}\\{(k,j) \in E}}} \left( w(i,k) + w(j,k) \right).
\end{equation}

We can then define $t_i$, which is the sum of the weights of $i's$ adjacent edges:
\begin{equation}
t_i = \displaystyle \sum_{j:(i,j)\in E}\;t(i,j)
\end{equation}

As $t(i,j)$ measures the strength of bonding between $i$ and $j$, we assume that utility for both $i$ and $j$ due to each other is proportional to their tie-strength when they choose the same strategy and their utility is 0 when they choose different strategies. This serves as the utility matrix of a coordination game.  Formally, let $s_i \in \{0,1,\ldots,r-1\}$ denote the strategy that player $i$ chooses; then $(s_i)_{i \in V}$ defines a strategy profile of the game. Let $s_{-i}$ denote the strategies of all the other players other than $i$. Utility of a player $i$ at a given strategy profile $s=(s_i, s_{-i})$ is given by
\begin{equation}
 u_i(s_i, s_{-i}) \propto \displaystyle \sum_{\substack{{j :(i,j)\in E}\\{s_i = s_j}}} t(i,j)
\label{eq:utility}
\end{equation}

We have chosen $\frac{1}{t_i}$ as the proportionality constant for our experiments, assuming that utility of a player depends only on his and his friends' strategies. This forms a strategic game  $\zeta^1 = (V, (S_i)_{i \in V}, (u_i)_{i \in V})$, where $V$ is the set of players, $S$ constitutes the set of $r$ ($r\geq 2$)strategies for each player $i$ and $u_i$ is the utility function for each player $i$ as defined in the equation (\ref{eq:utility}). We now show that $\zeta^1$ is a potential game with the potential function $\Phi_1:\lbrace 0, 1,\ldots,r \rbrace ^{|V|}  \mapsto \mathbb{R}$ that is defined as:
$$\Phi_1 = \displaystyle \sum_{\substack{(i,j) \in E \setminus E_c}} t(i,j),$$
where $E_c$ is the set of cut-edges.

\begin{theorem}
$\Phi_1$ is a weighted potential function.
\end{theorem}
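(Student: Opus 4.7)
The plan is to verify the defining identity of a weighted potential function directly, by fixing a player $i$, an opponent profile $s_{-i}$, and two strategies $s_i, s'_i \in S_i$, and comparing $\Phi_1(s'_i, s_{-i}) - \Phi_1(s_i, s_{-i})$ with $u_i(s'_i, s_{-i}) - u_i(s_i, s_{-i})$ term by term. I expect the weight to turn out to be $w_i = t_i$, since that is exactly the proportionality constant that was divided out when defining the utility in equation (\ref{eq:utility}).

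First, I would decompose the sum defining $\Phi_1$ according to whether an edge is incident to $i$ or not. Since only player $i$ changes strategy, for every edge $(j,k) \in E$ with $i \notin \{j,k\}$ the condition $s_j = s_k$ is unaffected by the deviation, so such edges contribute equally to $\Phi_1(s_i, s_{-i})$ and $\Phi_1(s'_i, s_{-i})$ and cancel in the difference. Hence only the edges incident to $i$ survive, and
\[
\Phi_1(s'_i, s_{-i}) - \Phi_1(s_i, s_{-i}) = \sum_{\substack{j:(i,j)\in E \\ s_j = s'_i}} t(i,j) \;-\; \sum_{\substack{j:(i,j)\in E \\ s_j = s_i}} t(i,j).
\]

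Next, I would substitute the utility definition using the proportionality constant $1/t_i$, namely $u_i(s_i, s_{-i}) = \frac{1}{t_i}\sum_{j:(i,j)\in E,\, s_j=s_i} t(i,j)$, and compute
\[
u_i(s'_i, s_{-i}) - u_i(s_i, s_{-i}) = \frac{1}{t_i}\Bigl(\sum_{\substack{j:(i,j)\in E \\ s_j = s'_i}} t(i,j) - \sum_{\substack{j:(i,j)\in E \\ s_j = s_i}} t(i,j)\Bigr).
\]
Comparing the two displays shows $\Phi_1(s'_i, s_{-i}) - \Phi_1(s_i, s_{-i}) = t_i \cdot (u_i(s'_i, s_{-i}) - u_i(s_i, s_{-i}))$, so the identity in the definition of a weighted potential function is satisfied with weight vector $w = (t_i)_{i \in V}$, which is strictly positive (assuming no isolated vertices).

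There is no real obstacle here; the argument is a direct bookkeeping check. The only thing to be careful about is to phrase the decomposition of $\Phi_1$ so that it is completely clear why edges not incident to $i$ cancel, and to note that the set $E \setminus E_c$ implicitly depends on the strategy profile through $E_c$, so one should really write $E_c(s)$ when tracking the change. After this is set up cleanly the two differences are visibly the same up to the factor $t_i$, which completes the proof.
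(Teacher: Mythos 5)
Your proof is correct and is exactly the intended argument: the paper itself omits an explicit proof of this theorem, but its proof of the analogous Theorem 2 for $\Phi_2$ proceeds by the same decomposition (edges not incident to the deviating player cancel, and the surviving terms equal the deviator's utility change up to the weight), so your verification with $w_i = t_i$ matches the paper's approach. Your added care about $E_c$ depending on the profile and about $t_i > 0$ requiring non-isolated vertices is a small improvement over what the paper writes.
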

%\begin{proof}
%\end{proof}

To ensure, we obtain edge closeness values, we solve $k$ independent local search problems, each with a different initialization. $k$ is chosen to be 100 for all our experiments and depends on the size of the set of strategies for each player. The more the strategies for all the players, the lesser the value of $k$. 

%Consider the caseWhen $r \geq c$ where $c$ is the number of communities, then the local optimum in a given game is most probably the community structure of the network. When $r=2$, then $k$ is chosen as 100. When $r=2$, in each game, when the network has a clear community structure, half of the communities get one color/strategy and half the communities share another strategy. As the number of the games increases, 

\begin{definition}
Edge-closeness value $p(i, j)$ for edge $(i, j)$ is the proportion of $k$ games in which every pair of friends $(i, j)$ have chosen the same strategy. 
\end{definition}

%Edge-closeness values for any pair which is not an edge is zero

These $p(i,j)$ values are representative of the macro-behavior emerged due to the micro-motives of each player to align with people having similar interests. Using the edge-closeness values for all the edges, we compute an intermediate partition, which is the set of connected components in the network considering only the edges whose edge-closeness values is at least $\beta$ ($\beta$ is set to $0.95$ for all experiments). $p(i,j)$ values can be less for an edge with a reasonably good tie-strength. The correlation of $p(i,j)$ and tie-strengths is discussed in the subsequent subsection.

%How can I show the correlation between p(i,j) and edge-strengths. 

\subsection{Second Phase}
This phase takes care of the mistakenly identified edges in the intermediate partition of first phase and decides their communities appropriately using edge-closeness values and outputs a final overlapping community structure. To measure the closeness of the communities to each vertex appropriately, we extend the notion of edge-closeness and define community-closeness of a vertex $i$ to a community $C$, $p(i,C)$ as follows:
%We define a strategic game and solve the game using its local search version. Not now. I need to look at its strategic version.

\begin{definition}
Given a vertex $i$ and a community $C$, define the community-closeness, $p(i,C)$ to be the sum of edge-closeness values between $i$ and $C$. That is, \[p(i,C) = \sum_{\substack{{j: (i,j) \in E}\\{j\in C}}} p(i,j).\]
\end{definition}

Intuitively, a vertex is said to be adjacent to a community, if it has any of its adjacent vertices in that community. For a given overlapping community structure $\Gamma$, we shall duplicate every vertex in each of its overlapping communities. Let ${\cal C}_i$ be the set of communities of player $i$ in a given overlapping community structure. So, we put a copy of vertex $i$ in each of its communities $C \in {\cal C}_i$. Define, the utility of the player $i$ as the sum of its community-closeness values to its communities. 

Consider a game $\zeta^2 = (V, (S_i)_{i \in V}, (u_i)_{i \in V})$ where $S_i$ is the set of all $i's$ adjacent communities, $u_i$ is the sum of $i's$ community-closeness values to all its communities. We show that this game is a potential game and compute a local optima. A trivial solution to this game is where every vertex is a part of every community. However, that solution which also happens to be the optimal solution to the game does not represent the actual community structure and is not interesting. So, we compute the local optima of this game which represents a stable community structure in the following way.  We also show that this local optima represents a community structure as long as the network has less mixing factor \cite{lancichinetti2009benchmarks}. 

Pick a random vertex ordering and each vertex in its turn, computes its community-closeness to each of its adjacent communities. The vertex chooses to be a part of those communities to which its community-closeness is at least $\alpha$ times its maximum community-closeness, if its utility increases with its decision. $\alpha$ is a overlap parameter and lies in $[0,1]$ and it is decreased to increase the overlap in the detected community structure. If $\alpha=1$, the resulting community structure has no overlap. For the purpose of solving the problem of disjoint community detection, we can set $\alpha=1$. For our experiments on synthetic networks, the results at $\alpha=0.5$ are almost same as the best results obtained at any $\alpha \in [0.3,0.7]$. 

The local optimum of this algorithm gives a stable overlapping community structure. This is because, for a network with a clear community structure, we get an intermediate partition which is a community structure without overlap. To deal with the overlapping vertices, we assume that the vertices which are overlapping have their internal degree equally distributed among its communities. We choose the communities for each vertex by computing their maximum community-closeness and allotting the communities to which its community-closeness is at least $\alpha=0.5$ times its maximum community-closeness. If a vertex is not overlapping, we suppose that all other communities of that vertex are less than $\alpha$ times it maximum community-closeness for a reasonably high $\alpha$. However, if two communities have many inter-community edges, that together the two communities look like a random graph, then the two communities are returned as a single community.

%For a network with a fuzzy community structure and overlap, an edge can be identified as a cut-edge in the intermediate partition because of two reasons: a) at least one of its adjacent vertices are overlapping vertices b) fuzziness of the network. 

Using the intermediate partition and edge-closeness values from the first phase and using the above algorithm, 

$\zeta^2$ is a potential game with potential function $\Phi_2: \lbrace {\cal C}_1, {\cal C}_2, \ldots, {\cal C}_{|V|} \rbrace \mapsto \mathbb{R}$ that is given by

\begin{equation}
\Phi_2 = \frac{1}{2}\cdot \displaystyle \sum_{\substack{{i \in V}\\{C \in {\cal C}_i}}} p(i,C)
\end{equation}

\begin{theorem}
 $\Phi_2$ is a weighted potential function.
\end{theorem}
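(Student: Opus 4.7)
The plan is to verify the weighted-potential definition from the preliminaries directly: fix a player $i$, two arbitrary strategies $s_i = {\cal C}_i$ and $s_i' = {\cal C}_i'$ (each a subset of $i$'s adjacent communities), and a fixed profile $s_{-i}$ of the other players; then compute $\Phi_2(s_i',s_{-i}) - \Phi_2(s_i,s_{-i})$ and show it equals $u_i(s_i',s_{-i}) - u_i(s_i,s_{-i})$. This will exhibit $\Phi_2$ as a weighted potential with weight $w_i = 1$ for every $i$.

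The change in $u_i$ is immediate from its definition:
\[
u_i(s_i',s_{-i}) - u_i(s_i,s_{-i}) \;=\; \sum_{C \in {\cal C}_i'} p(i,C) \;-\; \sum_{C \in {\cal C}_i} p(i,C).
\]
The first crucial observation to record is that, since $p(i,C) = \sum_{j:(i,j)\in E,\, j\in C} p(i,j)$, the quantity $p(i,C)$ depends only on which of $i$'s neighbors currently lie in $C$ (i.e.\ on $s_{-i}$); it is unaffected by $i$'s own deviation.

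For the change in $\Phi_2$, I would first rewrite $\Phi_2$ in edge-indexed form. Expanding $p(i,C)$ and swapping the order of summation, each edge $\{i,j\}$ with both endpoints in some community $C$ is counted once through $i$'s term and once through $j$'s term, so the factor $\tfrac{1}{2}$ is exactly the right correction and yields
\[
\Phi_2 \;=\; \sum_{C}\; \sum_{\{i,j\}\in E,\; i\in C,\; j\in C} p(i,j).
\]
Under a unilateral deviation by $i$, only communities in the symmetric difference ${\cal C}_i' \triangle {\cal C}_i$ see their inner sum change. For each $C \in {\cal C}_i' \setminus {\cal C}_i$ the inner sum gains exactly $\sum_{j:(i,j)\in E,\, j\in C} p(i,j) = p(i,C)$; for each $C \in {\cal C}_i \setminus {\cal C}_i'$ it loses $p(i,C)$. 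Summing gives
\[
\Phi_2(s_i',s_{-i}) - \Phi_2(s_i,s_{-i}) \;=\; \sum_{C \in {\cal C}_i'} p(i,C) - \sum_{C \in {\cal C}_i} p(i,C) \;=\; u_i(s_i',s_{-i}) - u_i(s_i,s_{-i}),
\]
as required.

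The main accounting hazard, and the step I would be most careful about, is the factor $\tfrac{1}{2}$ combined with the fact that in the original vertex-indexed form one must simultaneously track (a) the appearance or disappearance of $p(i,C)$ from $i$'s own sum, and (b) the change in $p(j,C)$ for every neighbor $j$ of $i$ with $C \in {\cal C}_j$, since such a $p(j,C)$ gains or loses $p(i,j)$ precisely when $i$ joins or leaves $C$. The edge-indexed rewriting above is exactly what collapses these two contributions into one symmetric expression, so I would perform that rewriting first and compute the difference only afterwards, to avoid the double-counting mistakes that are easy to make if one differentiates the vertex-indexed formula directly.
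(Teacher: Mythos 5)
Your proposal is correct and follows essentially the same route as the paper's own proof: both rewrite $\Phi_2$ in edge-indexed form so that the factor $\tfrac{1}{2}$ cancels the double count of each edge through its two endpoints, and then observe that a unilateral deviation by one player changes only the terms attached to that player's incident edges, yielding exactly the change in that player's utility (weight $w_i = 1$). Your explicit bookkeeping via the symmetric difference ${\cal C}_i' \triangle {\cal C}_i$ is a slightly tidier presentation of the same argument, not a different one.
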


\begin{proof}
When player $k$ changes his communities from $\zeta_k$ to $\zeta'_k$, let us say that the cover changes from $\Gamma$ to $\Gamma'$.
Then the difference in the potential function is as follows :
\begin{equation}
\frac{1}{2}\cdot \left( \displaystyle \sum_{\substack{{i \in V}\\{C \in {\cal C}_i}\\{{\cal C}_i \in \Gamma'}}} p(i,C) - \displaystyle \sum_{\substack{{i \in V}\\{C \in {\cal C}_i}\\{{\cal C}_i \in \Gamma}}} p(i,C)\right)
\end{equation}
Note that $p(i,C)=0$ for all $C$, which are not adjacent communities of player $i$.  
The above expression is same as
\begin{equation}
\sum_{\substack{{(i,j) \in E}\\{C \in {\cal C}'_i \cap {\cal C}'_j}\\{(i,j) \in C}}} p(i,j) - \sum_{\substack{{(i,j) \in E}\\{C \in {\cal C}_i \cap {\cal C}_j}\\{(i,j) \in C}}} p(i,j)
\end{equation}
where ${\cal C}_i$ and ${\cal C}'_i$ denotes the set of communities of player $i$ in the cover $\Gamma$ and $\Gamma'$ respectively.
% In a given round, when a player $k$ leaves $C_k$ and joins $C'_k$, then we need to compute the change in the potential function. Now, it needs to be shown that the change in the potential function is same as the change in the utility of the player $k$. Player $k$ changes its set of communities, to increase its utility. In the potential function, only the edges related to player $k$ are considered as follows.
When player $k$ changes its communities, then only the $k's$ adjacent edges $(k,l)$ need to be taken into account. 
\begin{equation}
\sum_{\substack{{l :(k,l) \in E}\\{C \in {\cal C}'_k \cap {\cal C}_l}\\{(k,l) \in C}}} p(k,l) - \sum_{\substack{{l :(k,l) \in E}\\{C \in {\cal C}_k\cap {\cal C}_l}\\{(k,l) \in C}}} p(k,l)
\end{equation}
which is same as the 
\begin{equation}
\sum_{\substack{C \in {\cal C}'_k}} p(k,C) - \sum_{\substack{C \in {\cal C}_k}} p(k,C)
\end{equation}
So, the above equation gives the difference in the utility of the player $k$ when it changed its set of communities from ${\cal C}_k$ to ${\cal C}'_k$, which is always positive. So, this defines a weighted potential game and hence the game converges. 
\end{proof}

At equilibrium, we obtain a stable and accurate overlapping community structure detected by our algorithm.

\subsection{Discussion}
One trivial equilibrium of this game that is also the optimal solution of the game is one where all the players choose the same strategy. However, given the selfish nature of individual nodes, this equilibrium is highly unlikely.

We show here that when a network has a clear community structure say $\Gamma$, with $c$ number of communities, then there exists a Nash equilibrium which represents the community structure $\Gamma$ of the network for the above game. We provide an approach to solve the game for a local optimum which represents the community structure $\Gamma$ with high probability. 

Assume that in $\Gamma$, there are few inter-community edges and each community in the network is very dense such that it is assumed to be a complete graph. Each vertex has the same set $S$ of strategies and $|S| = r$. Assign a random assignment of say $r$ strategies/colors/communities independently to all the vertices. In the initial assignment of colors, each vertex gets a given color with a probability of $1/r$. Then in each of $c$ communities, the distribution of vertices over $r$ colors is uniform. So, every color is equally likely to be present in each community and distribution of the number of vertices over all colors in each community is uniform. According to a uniform random vertex-ordering, allow vertices to make decisions on their strategies sequentially to increase their utility. Allow the vertices to make decisions according to that ordering and the vertex-ordering can be repeated until no vertex can increase its utility unilaterally. 

In one-walk of vertex ordering, where vertices make decisions sequentially, consider the sub-walk of vertices of a given community. Since there are almost no inter-community edges, there is almost zero influence on the decisions of vertices in any community due to the decisions made by vertices outside the community. 

\begin{theorem}
Assume, the community is a complete graph. For any vertex ordering, given such a community, the sequential decisions made by the vertices to maximize their utility according to their ordering, always leads to the local optimum in which all the vertices in the community gets a unique color. 
\end{theorem}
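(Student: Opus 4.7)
My plan is to combine the potential-function machinery of Theorem~1 with a sharp reduction of the utility function that occurs on a complete graph. Assuming the community is an unweighted complete graph on $n$ vertices, the set of common neighbours of every pair $(i,j)$ is exactly the remaining $n-2$ vertices, so the tie-strength evaluates to $t(i,j) = 1 + 2(n-2) = 2n-3$, a constant. Substituting into (\ref{eq:utility}), the utility of a player $i$ who currently plays colour $c$ is proportional to the number of \emph{other} vertices currently playing $c$, so $i$'s best-response is always to choose a colour of maximum current population.

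With this reduction in hand, the first step is to invoke Theorem~1: since $\zeta^1$ admits the weighted potential $\Phi_1$ and $\Phi_1$ takes only finitely many values, sequential best-response under any fixed vertex ordering must terminate at a pure Nash equilibrium. It therefore suffices to show that on a complete graph the \emph{only} Nash equilibria are the $r$ monochromatic profiles. To establish this, I would take any non-monochromatic profile $s$, denote the colour classes in use by sizes $n_{c_1} \leq n_{c_2} \leq \cdots \leq n_{c_k}$ with $k \geq 2$, and pick any vertex $i$ in the smallest class $c_1$. Its current payoff is proportional to $n_{c_1}-1$, while switching to $c_k$ yields payoff proportional to $n_{c_k} \geq n_{c_1} > n_{c_1}-1$, a strict improvement; hence $s$ is not an equilibrium. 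The monochromatic profiles are trivially equilibria, since any unilateral deviation from a unanimous colour drops the deviator's utility to $0$. Combined with termination, this forces the dynamics to reach one of them, which is precisely the claim.

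The main obstacle I anticipate is making the complete-graph hypothesis do enough work. If the graph were allowed to be weighted, then tie-strengths $t(i,j)$ can vary across pairs and a non-monochromatic equilibrium is in principle possible for adversarially chosen weights, so I would explicitly restrict to (or reduce to) the unweighted case, which is consistent with the paper's convention that unweighted edges carry unit weight. Once the tie-strengths are uniform, the ``smallest class supplies a profitable deviator'' step is a one-line counting argument, and the ``for any vertex ordering'' clause becomes automatic: a potential game converges to some Nash equilibrium under every fixed ordering, and here the equilibrium set happens to coincide with the set of monochromatic profiles.
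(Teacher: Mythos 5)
Your proposal is correct and follows essentially the same route as the paper: both arguments show that any non-monochromatic profile admits a profitable deviation into the most populous colour class (the paper deviates a vertex of the second-largest class, you deviate one of the smallest class --- the same counting inequality $n_{c_1}-1 < n_{c_k}$ either way), hence the only equilibria are monochromatic. Your write-up is somewhat more careful than the paper's, since you explicitly verify that tie-strengths are constant ($2n-3$) on an unweighted complete graph and invoke the potential function to guarantee termination, two steps the paper leaves implicit.
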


\begin{proof}
Let the given community has $f$ number of vertices. It is assumed as a complete graph on $f$ vertices. We prove that all the vertices in the community gets a unique color by contradicting it. Suppose that the community has $r'$ colors at local optimum where $1 < r' \leq r$. Let blue be the color of maximum number of vertices and red be the second best (ties are broken arbitrarily if two colors have the same number of vertices). Pick a red colored vertex. Since each vertex is connected to every other vertex, the red colored vertex is connected to maximum number of blue vertices and hence should have been colored blue at local optimum. So, it contradicts that the community has more than one color in local optimum.
\end{proof}

Given a color, the probability that all the vertices in a single community gets that color is 1/r. The probability that two communities get the same color is 1/r. The probability that all the communities gets unique colors is $\frac{\binom{r}{c} \cdot c!}{r^c}$. As $r \gg c$, each community gets a unique color with high probability. If each color represents the community, the local optimum represents the community structure $\Gamma$.

Thus we can compute a community structure of the network using the above algorithm for a network with a clear community structure and dense communities which we assume as complete graph. We repeat this game for enough number of times say $k$ and compute the edge-closeness values for each edge. Edge-closeness value for each edge is defined as the proportion of games such that an edge chooses the same community at equilibrium in any game. In this particular scenario when the network has a clear community structure, the expected value of edge-closeness value is 1 for all community-edges of the network and $1/r$ for all the cut-edges over all the $k$ games. The connected components considering the edges with edge-closeness value at least 0.95 represent the actual community structure $\Gamma$.

However, in reality, the actual communities are not always enough dense and the network usually do not possess a very clear community structure. So, the actual edge-closeness values for all the edges vary a bit from their expected values.

Hence, our algorithm to detect the fuzzy and overlapping community structure is formulated in two phases. 

\subsection{Algorithm Analysis}

We compute the pearson correlation between the edge-closeness values and edge tie-strengths. We find that there is a positive correlation between edge-closeness values and edge tie-strengths in all the cases. However, the correlation is not linear and it increases as the mixing factor increases as shown in the figures (\ref{fig:scatmu10om2}) and (\ref{fig:scatmu50om2}). At lesser mixing factors, all the edge-closeness values are distributed around 0.5 and 1. It is easy to differentiate between a cut-edge and a community-edge. As the mixing factor increases, the fuzziness of the community structure increases. The community edges are harder to detect when the network is fuzzy as their edge-closeness values spread far away from 1. So, a community-edge is more likely to be identified as a cut-edge, with increase in mixing factor, the edge-closeness value need not be more than 0.95 for a community-edge when the network is fuzzy. 

The distribution of edge-closeness values for varying mixing factors is also shown in the figures (\ref{fig:histmu10om2}) and (\ref{fig:histmu50om2}). For lesser mixing factors, the number of cut-edges is very less compared to that of the number of community edges. Also, the number of edges with edge-closeness values lying between (0.55 and 0.95) is almost negligible. With increase in the mixing factor, the number of cut-edges are almost same as the number of community edges. However, the number of edges with edge-closeness value lying between (0.55, 0.95), increases.

\begin{figure}[t!]
 \begin{subfigure}[b]{0.5\textwidth}
        \centering
        \subcaption[short for lof]{Scatterplot for mu = 0.1}
        \includegraphics[height=1.8in]{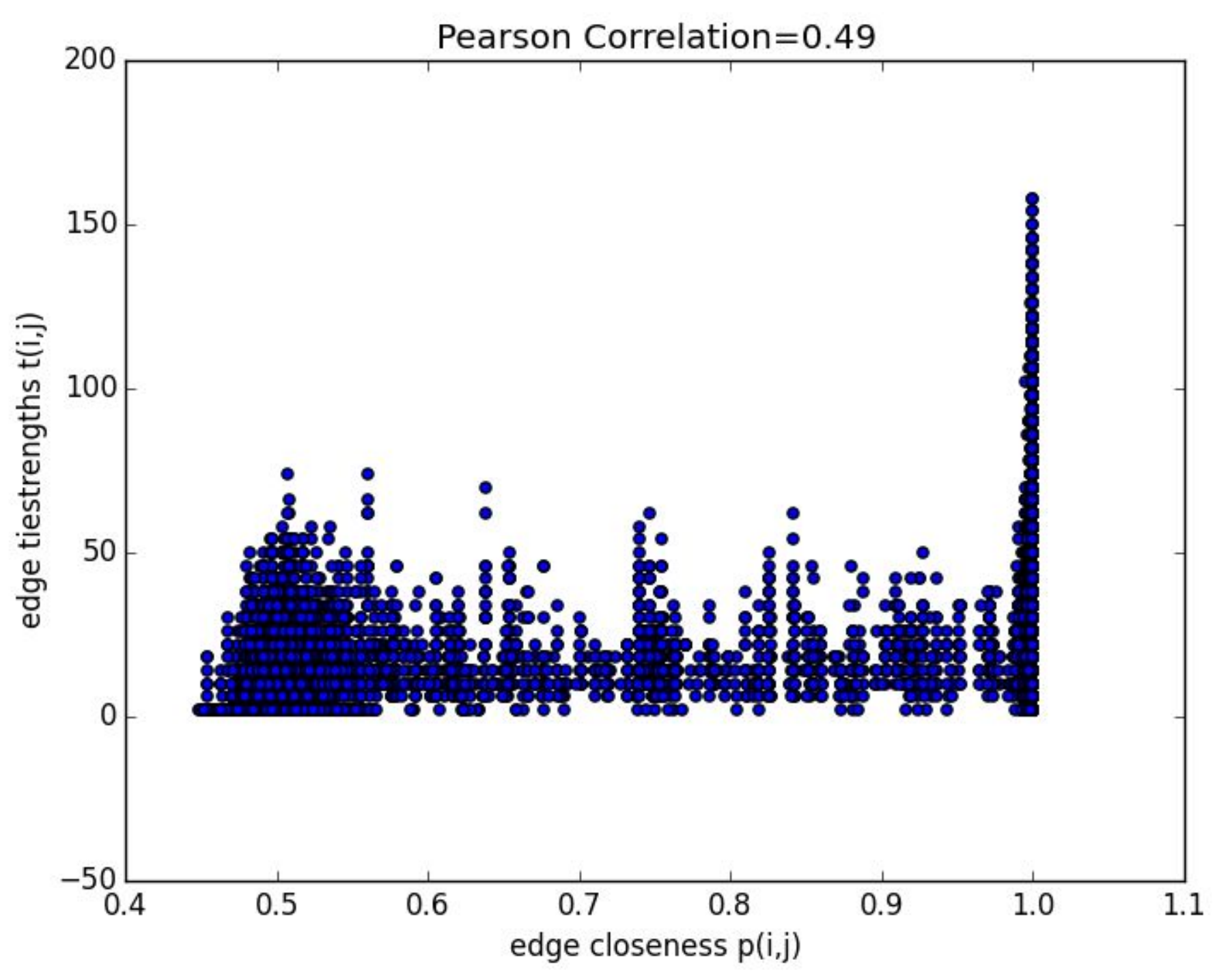}
        \label{fig:scatmu10om2}
  \end{subfigure}
    \quad
    \begin{subfigure}[b]{0.5\textwidth}
        \centering
        \subcaption[short for lof]{Histogram for mu = 0.1}
        \includegraphics[height=1.8in]{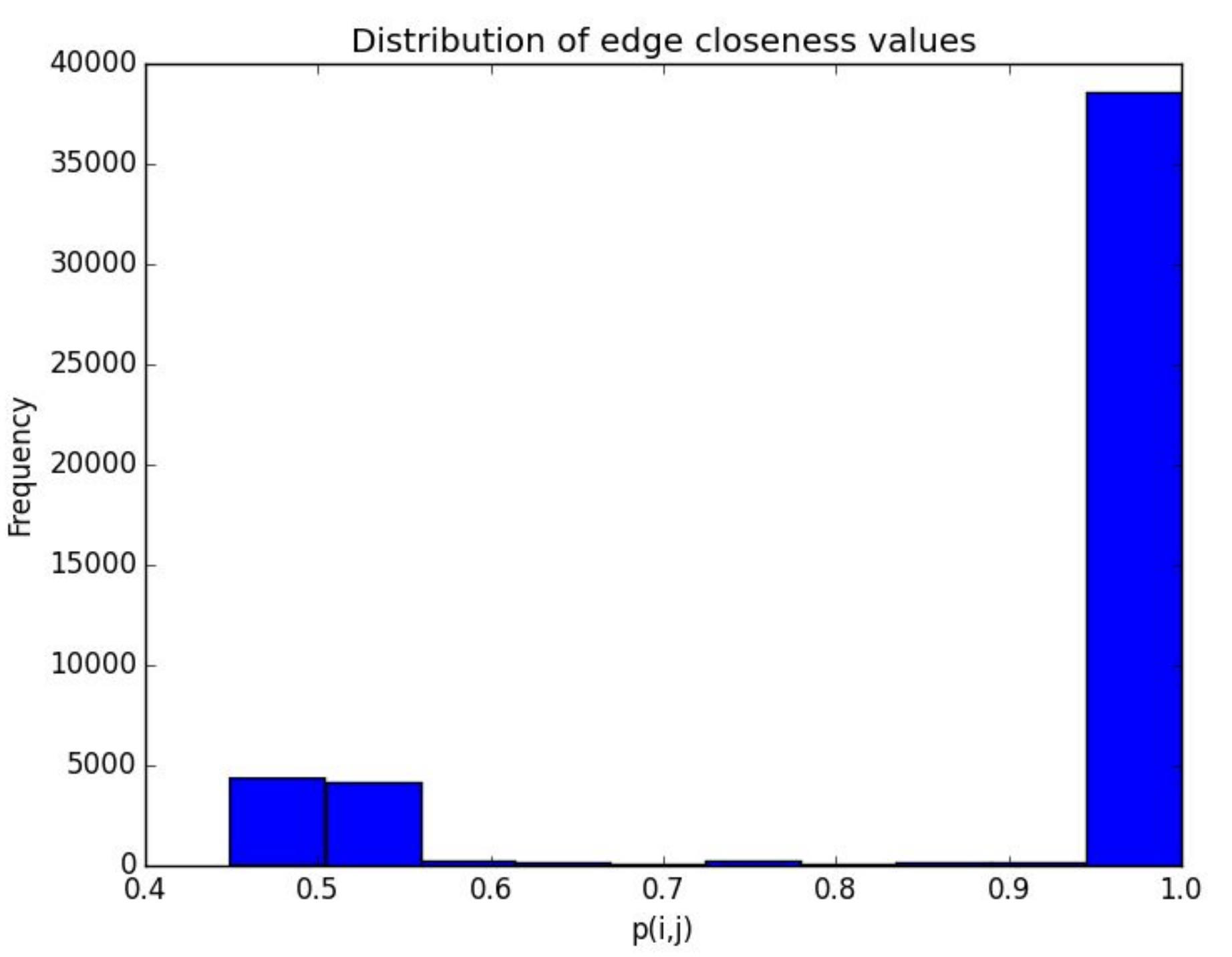}
        \label{fig:histmu10om2}
    \end{subfigure}
%    \label{fig:mu10om2}
% \end{figure}
%
%\begin{figure}[t!]
~
 \begin{subfigure}[b]{0.5\textwidth}
        \centering
         \subcaption[short for lof]{Scatterplot for mu = 0.5}
        \includegraphics[height=1.8in]{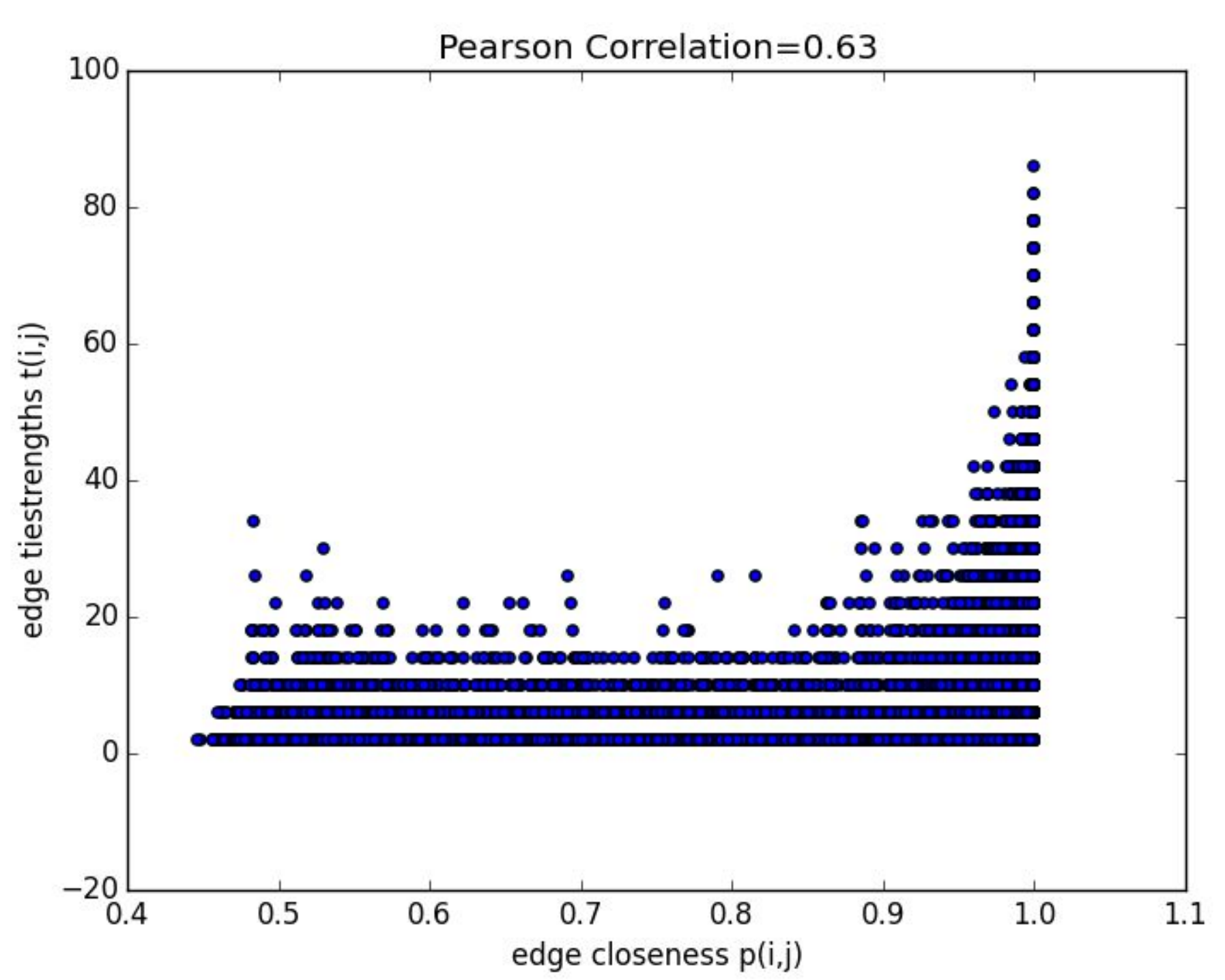}
        \label{fig:scatmu50om2}
    \end{subfigure}
    \quad
    \begin{subfigure}[b]{0.5\textwidth}
        \centering
        \subcaption[short for lof]{Histogram for mu = 0.5}
        \includegraphics[height=1.8in]{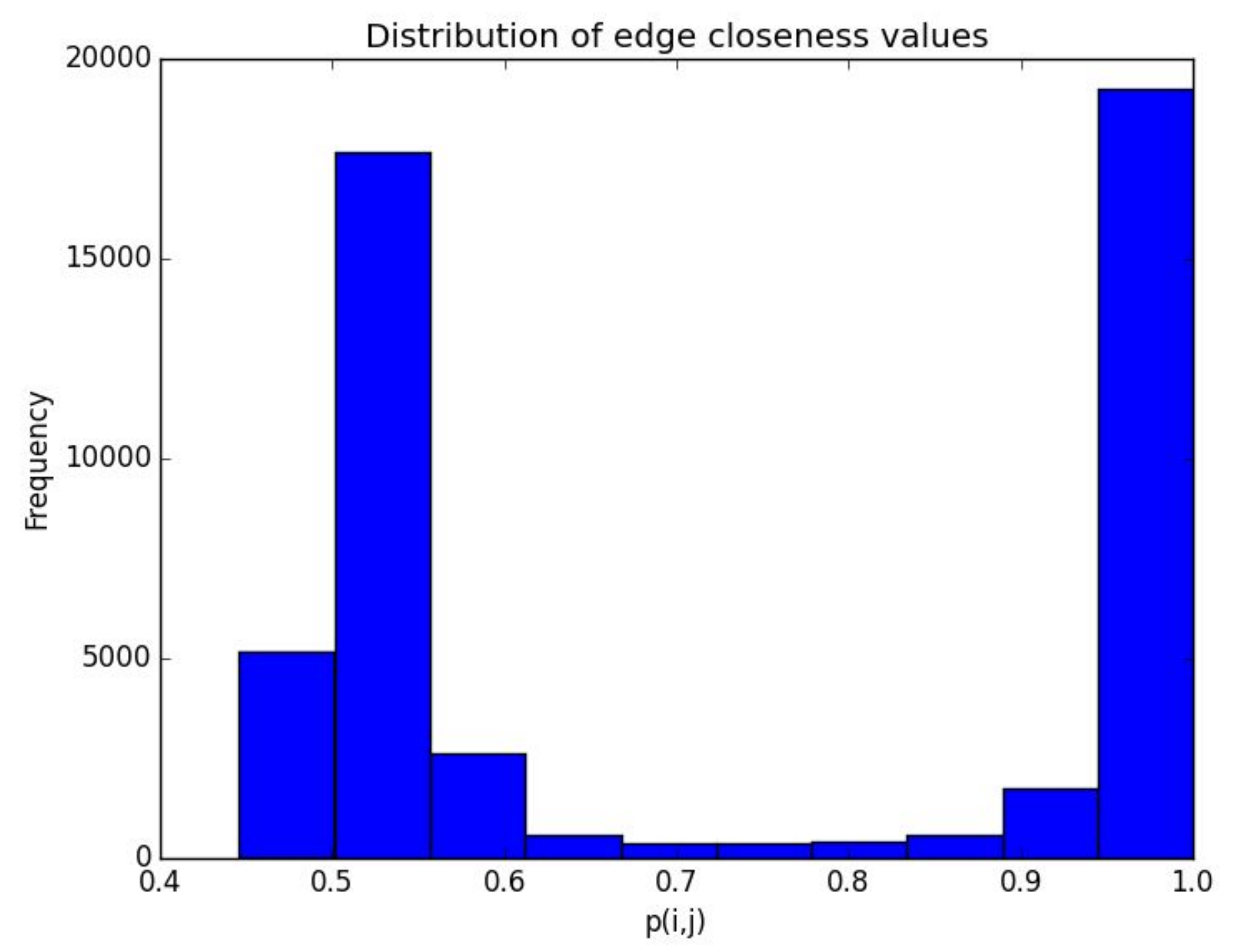}
        \label{fig:histmu50om2}
    \end{subfigure}
  \caption{Scatterplots and distribution of edge-closeness values for 5000 vertex network when $r=2$}
    \label{fig:mu50om2}
\end{figure}

%----------------------
\begin{figure}[t!]
 \begin{subfigure}[b]{0.5\textwidth}
        \centering
        \subcaption[short for lof]{Scatterplot for mu = 0.1}
        \includegraphics[height=1.8in]{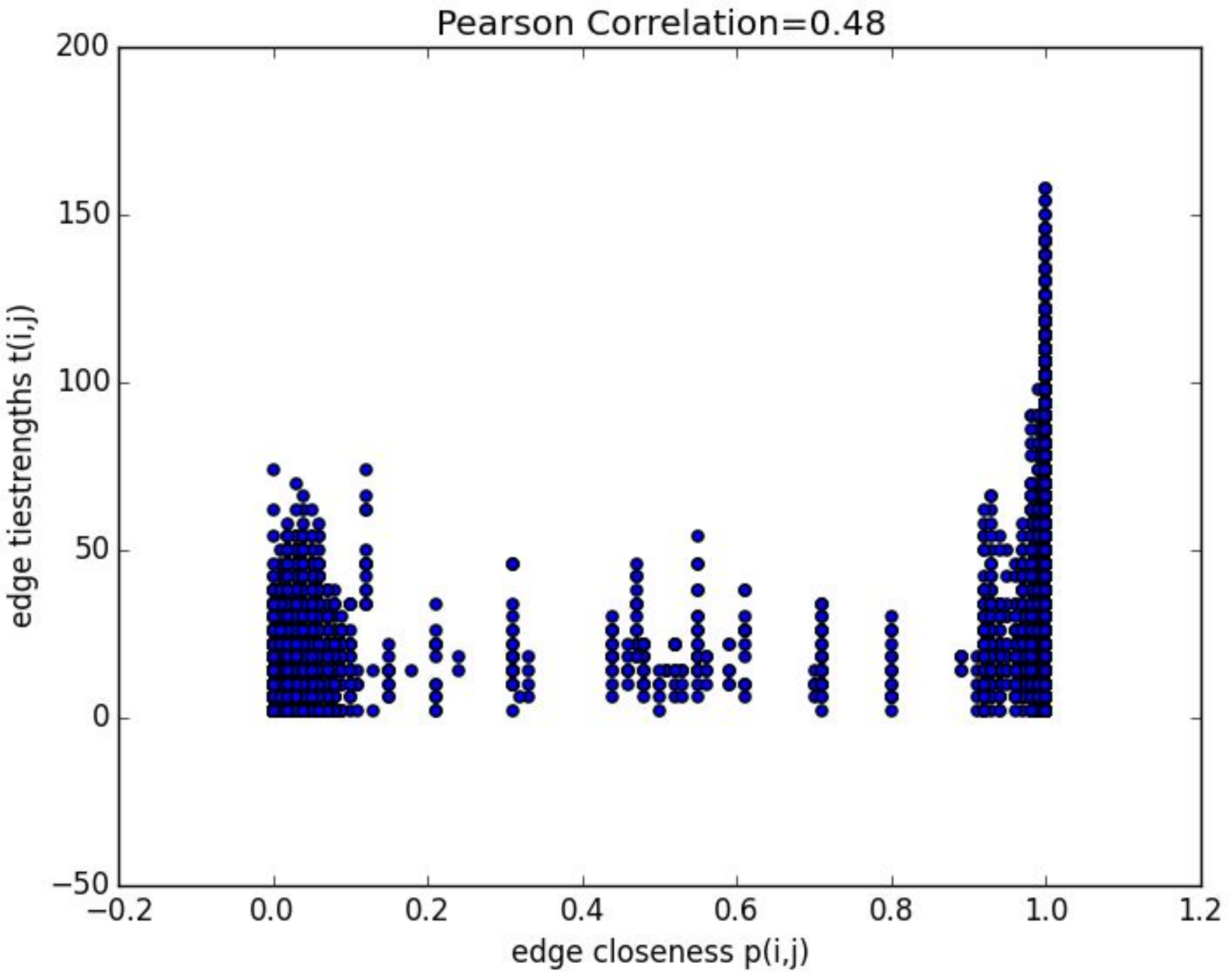}
        \label{fig:scatmu10om2r40}
  \end{subfigure}
    \quad
    \begin{subfigure}[b]{0.5\textwidth}
        \centering
        \subcaption[short for lof]{Histogram for mu = 0.1}
        \includegraphics[height=1.8in]{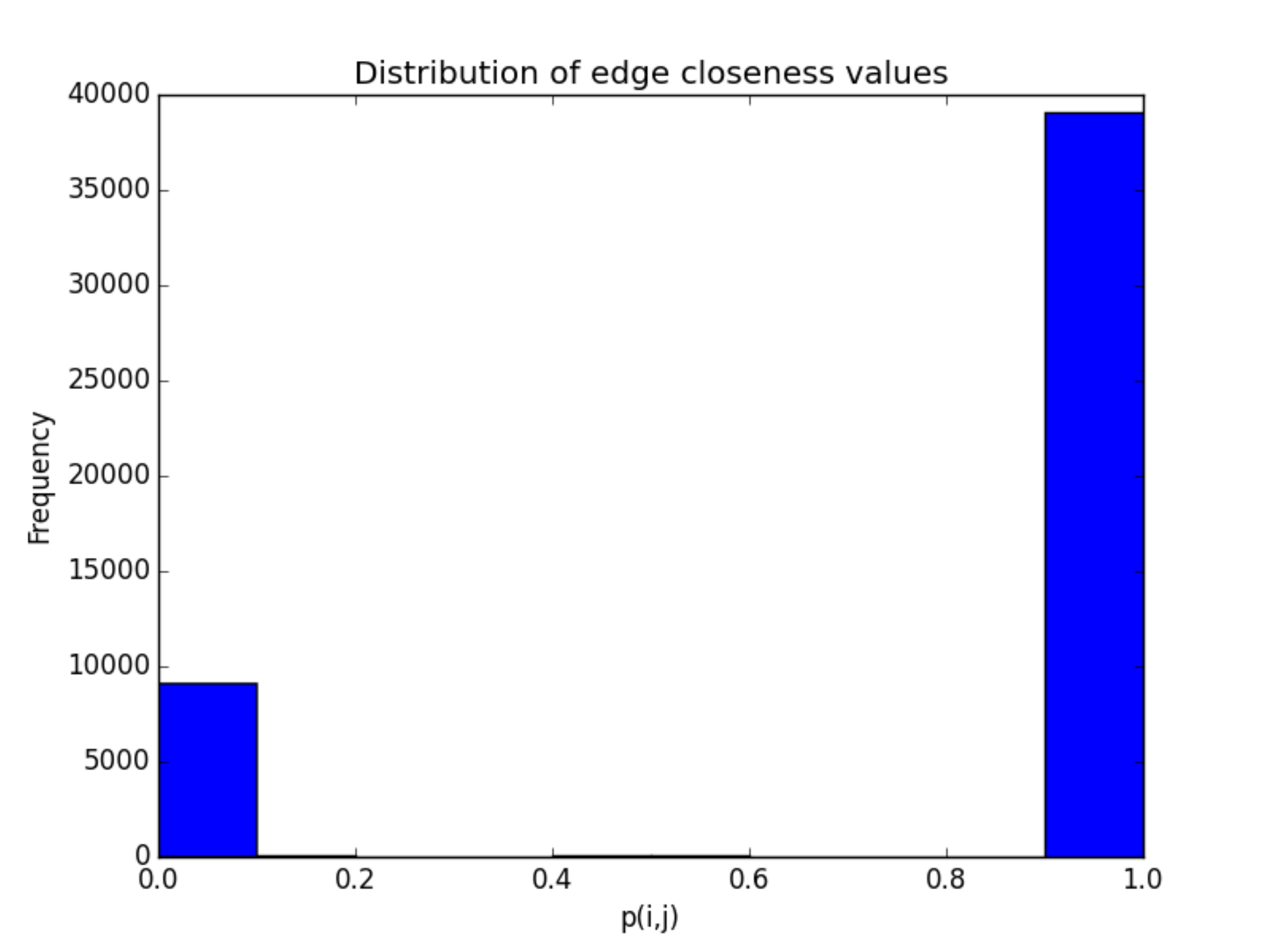}
        \label{fig:histmu10om2r40}
    \end{subfigure}
%    \label{fig:mu10om2}
% \end{figure}
%
%\begin{figure}[t!]
~
 \begin{subfigure}[b]{0.5\textwidth}
        \centering
         \subcaption[short for lof]{Scatterplot for mu = 0.5}
        \includegraphics[height=1.8in]{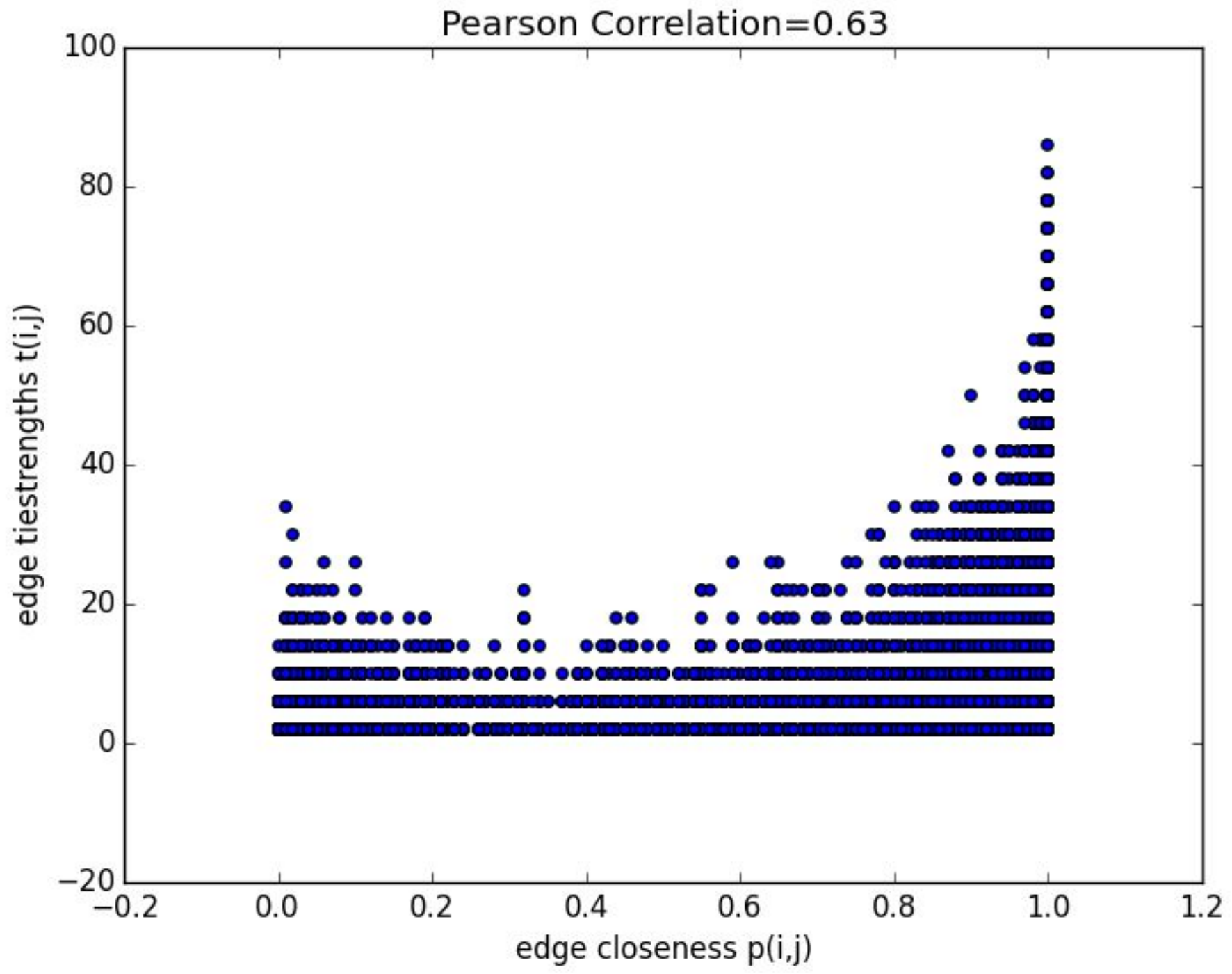}
        \label{fig:scatmu50om2r40s}
    \end{subfigure}
    \quad
    \begin{subfigure}[b]{0.5\textwidth}
        \centering
        \subcaption[short for lof]{Histogram for mu = 0.5}
        \includegraphics[height=1.8in]{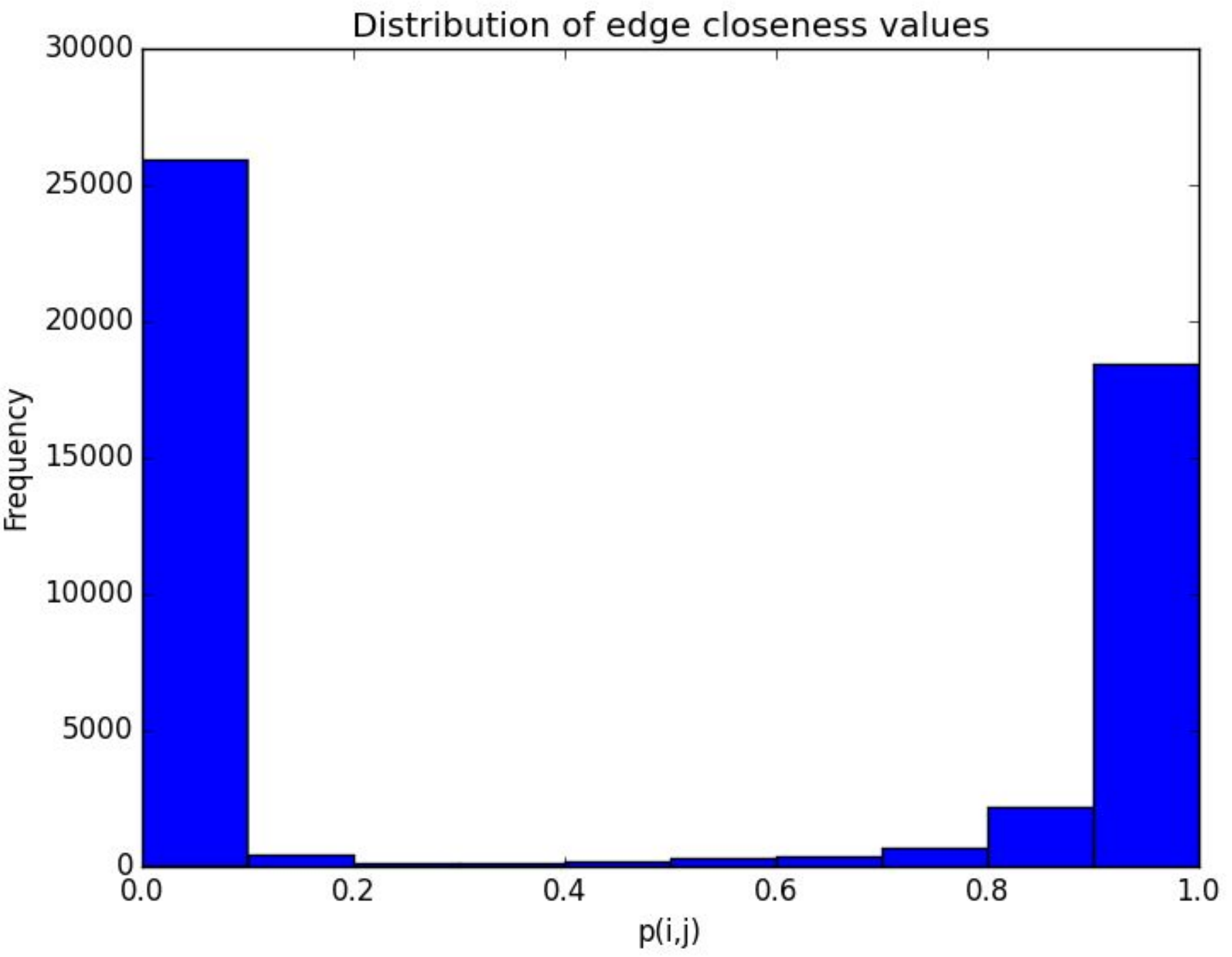}
        \label{fig:histmu50om2r40}
    \end{subfigure}
  \caption{Scatterplots and distribution of edge-closeness values for 5000 vertex network when $r=40$}
    \label{fig:mu50om2}
\end{figure}

\begin{theorem}
The problem of computing the local optimum of the graph coordination game formulated in our paper is NP-Hard. 
\end{theorem}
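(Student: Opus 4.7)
The plan is to obtain hardness by a polynomial-time reduction from a known hard local-search problem, following the Hopfield-network template in Kleinberg and Tardos (Chapter~12). It suffices to establish hardness in the two-strategy regime $r=2$, so I would restrict to this setting; a strategy profile is then a 2-coloring of $V$, equivalently a cut $(A,B)$ of the graph weighted by the tie-strengths $t(\cdot,\cdot)$.

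The first step is to identify the potential with a cut objective. Setting $W := \sum_{(i,j)\in E} t(i,j)$, the identity
\[
\Phi_1(s) \;=\; W \;-\; \mathrm{cut}_t(A,B)
\]
follows immediately from the definition of $\Phi_1$ (as a sum over non-cut edges). The positive per-player normalizer $1/t_i$ does not change any player's preference order over its own strategies, so Nash equilibria of $\zeta^1$ correspond bijectively with the configurations that are locally optimal for the weighted cut objective on $(V,E,t)$ under the single-vertex flip neighborhood.

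Next, I would invoke the classical result that computing such a locally optimal weighted cut is PLS-complete (Sch\"affer--Yannakakis; exposited in Kleinberg--Tardos, Ch.~12), and in particular NP-Hard in its natural decision formulation (``does there exist a local optimum whose objective is at least $B$?''). That decision problem already subsumes the decision version of Max-Cut, since every global maximum is in particular a local maximum. Transporting this statement through the equivalence above yields the claim for $\zeta^1$, and therefore for the broader family of games considered in the paper.

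The main obstacle is the distinction between the \emph{induced} tie-strengths $t(\cdot,\cdot)$ of our game and an \emph{arbitrary} edge-weight function $w(\cdot,\cdot)$ coming from the Max-Cut side of the reduction: in general, $t(i,j)$ accumulates contributions from common neighbors of $i$ and $j$, so one cannot simply hand an arbitrary weighted graph to $\zeta^1$ and expect the weights to be preserved. The cleanest workaround is to carry out the reduction on triangle-free weighted inputs, where $t \equiv w$ automatically, and on which local weighted Max-Cut is already known to be hard. A secondary care-point is the trivial ``all players choose the same strategy'' equilibrium: it does not short-circuit the reduction, since the threshold $B$ in the decision instance can be scaled so that the uniform coloring falls strictly below it.
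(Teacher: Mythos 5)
The paper itself supplies no proof of this theorem---it is stated bare, with only a pointer to ``ideas from'' Kleinberg and Tardos---so there is no authorial argument to compare yours against; what follows is an assessment of your proposal on its own terms.

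There is a genuine gap, and it is one of direction. Your identity $\Phi_1(s) = W - \mathrm{cut}_t(A,B)$ is correct, but it shows that Nash equilibria of $\zeta^1$ are local \emph{minima} of the weighted cut, not local maxima. The Sch\"affer--Yannakakis / Kleinberg--Tardos hardness results concern local \emph{max}-cut under single-vertex flips, which arises from Hopfield-style instances with negative (anti-coordination) edge weights. Here every $t(i,j)>0$ and agreement is always rewarded, so the uniform assignment (all players in one community) attains $\Phi_1 = W$, the global maximum of the potential; it is a Nash equilibrium computable in constant time. The paper itself concedes this (``One trivial equilibrium of this game that is also the optimal solution of the game is one where all the players choose the same strategy''). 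Consequently the search problem ``compute a local optimum of $\zeta^1$'' is trivial, and your decision formulation ``is there a local optimum with objective at least $B$'' is also trivially a yes-instance for every $B \le W$, because the uniform coloring is a local (indeed global) optimum exceeding any such threshold. Your closing remark that the threshold ``can be scaled so that the uniform coloring falls strictly below it'' has the inequality backwards: under the maximization of $\Phi_1$ the uniform coloring sits \emph{above} every admissible threshold, so it cannot be scaled away. The triangle-free device for forcing $t \equiv w$ is a sensible repair for the secondary issue you correctly identified, but it does not touch the primary one. To salvage a hardness claim you would need to change what is being computed---e.g.\ the specific local optimum reached by a prescribed best-response schedule from a prescribed initialization (the ``standard algorithm problem,'' which is where PLS-style hardness genuinely lives), or a local optimum satisfying a nontriviality constraint that excludes the uniform profile---and as stated neither the theorem nor your reduction pins that down.
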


However this game can be run faster by allowing only big-enough improvements taking the ideas from \cite{gaur2008capacitated} and \cite{kleinberg2006algorithm}. 

%Assume the number of strategies for each vertex be 2.  At any local optimum, the number of communities is at most 2. When the network has a very clear community structure and all the communities are more or less the same size, then half of its communities side with one strategy and the other half with another strategy. When the game is repeated enough number of times, then the proportion of the games in which a community-edge appears as a community-edge should be at least $95\%$. 

\begin{definition}
An improvement in the potential function is a good-enough improvement if a vertex changes its strategy only if the potential function increases by a factor of $\frac{2 \cdot \epsilon}{n}$ for some $\epsilon > 0$ where $n$ is the number of vertices.
\end{definition}

The entire network is the globally optimal solution. So, the maximum value of the potential function in the first phase is the sum of all the edge tie-strengths. We compute the estimate of the local optimum obtained by our algorithm in the first phase using the following theorem.

\begin{theorem}
Let $s$ be a local optimum. Then if we allow only good-enough improvements while solving the game, we get $(2+\epsilon) \cdot \Phi_1(s) \geq W$
\end{theorem}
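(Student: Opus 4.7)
The plan is to adapt the standard local-search approximation argument (of Kleinberg and Tardos, as already invoked for the NP-hardness claim) to the max-coordination potential $\Phi_1$, reading the ``good-enough improvement'' rule as an $\epsilon$-approximate local-optimum condition. First I would fix such an $s$: by hypothesis, for every player $i$ and every alternative $s_i' \in S_i$,
\[
\Phi_1(s_i', s_{-i}) - \Phi_1(s) \;\leq\; \frac{2\epsilon}{n}\,\Phi_1(s),
\]
since otherwise player $i$ would have moved. By the weighted potential property established in the earlier theorem, this is exactly an $\epsilon$-approximate no-deviation condition at the utility level.

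Next I would set up the natural vertex-level decomposition. Write $U_i(c) = \sum_{j:\,(i,j)\in E,\; s_j=c} t(i,j)$, so that $\sum_{c\in S_i} U_i(c) = t_i$ (each edge incident to $i$ contributes to exactly one bucket) and $\sum_{i\in V} U_i(s_i) = 2\Phi_1(s)$ (each non-cut edge is counted once by each endpoint). In these terms the approximate-optimality inequality reduces to
\[
U_i(c) - U_i(s_i) \;\leq\; \frac{2\epsilon}{n}\,\Phi_1(s) \qquad \text{for every } c \in S_i.
\]

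Then I would apply this to the strategy $c \neq s_i$ for which $U_i(c) = t_i - U_i(s_i)$ (the complementary-side strategy in the two-color cut interpretation), obtaining $t_i - 2 U_i(s_i) \leq \tfrac{2\epsilon}{n}\Phi_1(s)$. Summing over all $n$ vertices and using $\sum_i t_i = 2W$ together with $\sum_i U_i(s_i) = 2\Phi_1(s)$,
\[
2W - 4\,\Phi_1(s) \;\leq\; 2\epsilon\,\Phi_1(s),
\]
which rearranges to $(2+\epsilon)\,\Phi_1(s) \geq W$, the desired bound. For $r > 2$ strategies the analogous argument averages the local-optimality inequality over all $c \in S_i$, giving $r\cdot U_i(s_i) \geq t_i - 2(r{-}1)\epsilon\,\Phi_1(s)/n$ and ultimately an $(r + O(\epsilon))$-approximation; the statement as written captures the sharp factor relevant to the pairwise same-community vs.\ cut decomposition that underlies the first phase.

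The main obstacle I expect is bookkeeping the cumulative $\epsilon$-slack: each vertex contributes an additive error of $\tfrac{2\epsilon}{n}\Phi_1(s)$, and summing over $n$ vertices must produce exactly one factor of $\epsilon\,\Phi_1(s)$ on the right-hand side so that the final constant is $2+\epsilon$ rather than, say, $2+2\epsilon$. This is precisely why the ``good-enough'' threshold is normalised by $n$ in the definition, and I would verify that no double-counting occurs when invoking both the $\sum_c U_i(c) = t_i$ identity and the per-vertex inequality simultaneously.
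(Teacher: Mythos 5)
Your proposal is correct and follows essentially the same route as the paper's own proof: the per-vertex approximate-no-deviation inequality implied by the good-enough-improvement rule, summed over all vertices, combined with the identities $\sum_{c} U_i(c) = t_i$ and $\sum_i U_i(s_i) = 2\Phi_1(s)$. You are in fact more careful than the paper on the two points it glosses over --- the normalization of $W$ (the paper declares $W=\sum_i t_i$ but uses it as the total edge tie-strength, which is the reading that makes the bound true) and the fact that the $2+\epsilon$ factor genuinely requires the two-strategy complementary-bucket step, with general $r$ yielding only an $r+(r-1)\epsilon$ factor by the same argument.
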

\begin{proof}
Let $u_i$ be the utility of vertex $i$ in any game in the first phase. At a local optimal strategy profile of vertices, let the partition be $(A_1,A_2, \ldots, A_r)$.  Since only good enough improvements are allowed, the following equations hold true $\forall h \in \lbrace 1, 2, \ldots, r \rbrace$,
\begin{equation}
 \sum_{i \in A_h} u_i(s) \geq \sum_{i \in A_h} \left(1-u_i(s) - 2\cdot \epsilon \cdot \Phi_1(s)/n \right)
\end{equation}

Adding the above $r$ equations, we get the following
\begin{equation}
2 \cdot \Phi_1(s) \geq 2\cdot(W-\Phi_1(s)) - 2 \cdot \epsilon \cdot \Phi_1(s)
\end{equation}
where $W=\sum_i t_i$.

We get, 
\begin{equation}
\Phi_1(s) \geq \frac{W}{2+\epsilon}
\end{equation}
\end{proof}

\begin{theorem}
The greedy algorithm which accepts good-enough improvement terminates in at most $O(\epsilon^{-1}\cdot n \cdot log(W))$ rounds, where $W=\sum_i t_i $ in the first phase and terminates in at most $O(\epsilon^{-1}\cdot n \cdot log(W))$ rounds, where $W=\sum p(i,j)$ in the second phase.
\end{theorem}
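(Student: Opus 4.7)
I would use the classical geometric-progression argument for approximate local search with a threshold-gated improvement rule. Each accepted move multiplies the potential by at least $1 + 2\epsilon/n$, the potential is sandwiched between a strictly positive initial value and a problem-specific upper bound $W$, so the total number of moves is $\Theta(n/\epsilon) \cdot \log(W/\Phi^{(0)})$, which matches the stated $O(\epsilon^{-1} n \log W)$. The same argument is then run twice with different potentials and different $W$'s: $\Phi_1$ and $W = \sum_i t_i$ in phase one, $\Phi_2$ and $W = \sum_{(i,j) \in E} p(i,j)$ in phase two.

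\textbf{Key steps.} First, I would unpack the definition of a good-enough improvement: when vertex $i$ switches strategy, the accepted move satisfies $\Phi(s') \geq \Phi(s) + (2\epsilon/n)\,\Phi(s) = (1 + 2\epsilon/n)\,\Phi(s)$. Iterating over $T$ accepted moves starting from potential $\Phi^{(0)}$ gives $\Phi^{(T)} \geq \Phi^{(0)} (1 + 2\epsilon/n)^T$. Second, I would upper-bound the potential by $W$ in each phase: $\Phi_1$ is a sum over non-cut edges of tie-strengths, so $\Phi_1 \leq \sum_{(i,j) \in E} t(i,j) \leq \sum_i t_i = W$; for $\Phi_2$, the factor $1/2$ and the fact that each edge $(i,j)$ with a shared community contributes symmetrically from both endpoints gives $\Phi_2 \leq \sum_{(i,j) \in E} |\mathcal{C}_i \cap \mathcal{C}_j|\,p(i,j) \leq \mathrm{poly}(n) \cdot W$ (the maximum overlap is polynomially bounded by the number of communities). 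Combining the two inequalities and using $\ln(1+x) \geq x/2$ for $x \in (0,1]$,
\[
T \;\leq\; \frac{\ln(W/\Phi^{(0)})}{\ln(1 + 2\epsilon/n)} \;\leq\; \frac{n\,\ln(W/\Phi^{(0)})}{\epsilon} \;=\; O\!\left(\epsilon^{-1}\, n \log W\right),
\]
and this bound holds both in phase one and in phase two once the appropriate $W$ is substituted.

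\textbf{Main obstacle.} The delicate step is controlling $\Phi^{(0)}$: a random initial strategy profile could in principle yield $\Phi^{(0)} = 0$, or a very small positive value that would spoil the $\log W$ bound. I would resolve this by starting the geometric chain from the first accepted move that makes the potential strictly positive, and then lower-bounding that first positive value by the smallest nonzero summand of the potential: $\min_{(i,j) \in E} t(i,j)$ in phase one, and $1/k$ in phase two (edge-closeness values are multiples of $1/k$ by construction). Under the standard convention that input weights are polynomially bounded, $\log(1/\Phi^{(0)})$ contributes only an additive $O(\log W)$, which is absorbed into the stated bound. No other step requires care beyond the algebraic manipulations above, so the proof essentially reduces to invoking the geometric inequality and checking that both phases fit the same template with their respective potential and weight bound.
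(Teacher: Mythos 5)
Your proposal is correct and follows essentially the same argument as the paper: each accepted improvement multiplies the potential by at least $1+\Theta(\epsilon/n)$, so the potential doubles every $O(n/\epsilon)$ steps and can double at most $O(\log W)$ times before hitting the upper bound $W$. Your extra care about lower-bounding the initial potential $\Phi^{(0)}$ is a reasonable refinement of a point the paper's proof silently glosses over, but it does not change the route.
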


\begin{proof}
Each improvement increases the objective function by at least a factor of $(1 + \frac{\epsilon}{n})$. Since $(1 + \frac{1}{x})^x \geq 2$ for any $x \geq 1$, we see that $(1 + \frac{\epsilon}{n})^{(\frac{n}{\epsilon})} \geq 2$, and so the objective function increases by a factor of at least 2 every $\frac{n}{\epsilon}$ flips. The weight cannot exceed $W$, and hence it can only be doubled at most $log W$ times. 
\end{proof}

%\begin{theorem}
%The value of the local optimal in any game in first phase is given by the value of its potential function $\Phi_1$ and is at least 0.5 times $W$, the sum of all the edge tie-strengths.
%\end{theorem}
%
%\begin{proof}
%The value of the local optimal in any game in first phase is given by the value of its potential function $\Phi_1$. The utility of each vertex is at least 0.5 times the sum of its tie-strengths to its neighbors. Sum of the utilities of all the vertices is at least the sum of the tie-strengths of all the edges. So, the value of $\Phi_1$ at equilibrium is at least 0.5 times $W$, where $W$ where $W = \sum t_i$. 
%\end{proof}

Note that, our algorithm does not work if the local search algorithm is modified such that it always detects the global optimum in any of the two phases. However if the network has no community structure, then our algorithm $NashOverlap$ always detects the global optimum and the entire network is detected as the community structure. That is the reason, our algorithm cannot detect the community structures when the mixing factor is greater than 0.5 or when the number of overlapping vertices increases to $50\%$ accurately.

\begin{algorithm}[H]
\caption{\textit{NashOverlap}}
\begin{algorithmic}
\algsetup{linenosize=\tiny}
\small
\STATE $G(V,E,w)$ is the input network
\FOR {$v \in V$}
\STATE Pick strategy $s_v$ in $\lbrace 1, 2, \ldots, r\rbrace$ uniformly at random 
\ENDFOR
\FOR{$each\;game$}
\STATE Pick an ordering assignment of vertices say $A$
\REPEAT
\FOR {$v \in A$}
\STATE Choose $s_v$ such that $u_v(s_v,s_{-v})$ is maximum
\ENDFOR
\UNTIL partition is the same in two subsequent iterations
\STATE Update the edge-closeness values $p(i,j)$ for all edges
\ENDFOR
\STATE Compute intermediate partition such that $p(i,j)>\beta$ 
\STATE $\alpha$ is a overlap parameter to be tuned.
\REPEAT
\FOR {$v \in V$}
\STATE Let $\zeta_v$ be set of v's adjacent communities
\FOR {$C \in \zeta_v$ }
\STATE Compute the community-closeness $p_v(C)$.
\ENDFOR
\STATE Compute $\max_C~\lbrace p_v(C) \rbrace$
\STATE Let $\zeta'_v=\lbrace C|p_v(C) > \alpha.\max_C (p_v(C)) \rbrace $
\STATE Assign $\zeta'_v$ as $v$'s communities if $u_v(\zeta'_v) \textgreater u_v(\zeta_v)$
\ENDFOR
\UNTIL  Community structure is same in two subsequent iterations
\RETURN Return the overlapping community structure
\end{algorithmic}
\label{alg2}
\end{algorithm}

\section{Experiments}
We first describe the set up for the experiments and then provide the results. 

\subsection{Setup}
% How are networks generated
To study the performance of \textit{NashOverlap}, we conducted extensive experiments on LFR benchmark networks \cite{lancichinetti_benchmarks_2009} whose ground truth community structure is already known. 

Given a community structure, mixing factor $\mu$ is the maximum fraction of degree of each non-overlapping vertex, outside its community. For example, if the mixing factor is 0.1, then any non-overlapping vertex in any given community in that community structure has at most $10\%$ of its degree outside its community. Overlapping membership $om_i$ is the number of communities of the vertex $i$. Let $on$ be the number of overlapping vertices in a given network and $\gamma_1$ and $\gamma_2$ be the exponents of power law distributions for degree and community size respectively. 

 \begin{table}
 \small
  \begin{center}
 \begin{tabular} {|l| l| l|  l|  l|  l|}
 \hline 
$om$ & SLPA & COPRA & CFinder & OSLOM & Nash \\ \hline  
2 & 0.97943&0.9985&0.87659&0.99469&0.999807\\ \hline
3 & 0.93156&0.99305&0.8495&0.97704&0.998647\\ \hline
4 & 0.88207&0.96666&0.84058&0.94059&0.995646\\ \hline
5 & 0.83685&0.90756&0.82456&0.90608&0.974573\\ \hline
6 & 0.80283&0.83821&0.79269&0.86573&0.946112\\ \hline
7 & 0.7672&0.77883&0.75958&0.82451&0.90833\\ \hline
8 & 0.72346&0.71594&0.73811&0.77809&0.86996\\ \hline
 \end{tabular}
 \end{center}
 
 \begin{center}
 \begin{tabular} {|l| l| l|  l|  l|  l|}
 \hline
$om$ & SLPA & COPRA & CFinder & OSLOM & Nash\\ \hline
2 & 0.96908&0.99386&0.573&0.99145&0.98645 \\ \hline
3 & 0.92113&0.96677&0.59233&0.95551&0.982377 \\ \hline
4 & 0.86531&0.91272&0.6062&0.90197&0.960399 \\ \hline
5 & 0.81523&0.83743&0.58491&0.8488&0.914377 \\ \hline
6 & 0.76729&0.77201&0.6287&0.80385&0.861334 \\ \hline
7 & 0.72642&0.70658&0.6152&0.75626&0.808351\\ \hline
8 & 0.67675&0.65799&0.59361&0.71198&0.769981 \\ \hline
 \end{tabular}
\end{center}

 \begin{center}
 \begin{tabular} {|l| l| l|  l|  l|  l|}
 \hline
 $om$ & SLPA & COPRA & CFinder & OSLOM & Nash \\ \hline
2 & 0.87444&0.95993&0.36046&0.97164&0.968639 \\ \hline
3 & 0.79596&0.89308&0.3135&0.89679&0.94078 \\ \hline
4 & 0.75394&0.80646&0.37866&0.82748&0.877891 \\ \hline
5 & 0.692&0.74409&0.36277&0.76214&0.787768 \\ \hline
6 & 0.65122&0.69011&0.32862&0.7129&0.727489 \\ \hline
7 & 0.60913&0.64374&0.34251&0.66702&0.678984\\ \hline
8 & 0.56703&0.59485&0.34953&0.62428&0.63227\\ \hline
\end{tabular}
\caption{Comparison of algorithms on 5000 vertex networks varying overlapping membership from 2 to 8 and community sizes in the range [20, 100] for $\mu = 0.1$ and $\mu = 0.3$ and $\mu = 0.5$ respectively and $on = 10\%$.}
\label{tab:5kl}
\end{center}
 \end{table}
  
Two overlapping community structures are evaluated for their similarity using the standard measure, \textit{Normalized Mutual Information}(NMI) \cite{lancichinetti_detecting_2009}. The networks are generated by considering the following parameter values:
\begin{itemize}
\small
\item $\mu$ : Mixing factor varied between 0.1 to 0.5 in steps of 0.1
\item $n$ : Network size in $\lbrace 1000, 5000, 10000, 100000, 500000 \rbrace$
\item $om$ : Overlapping membership in $\lbrace 2, 3, \ldots, 7, 8 \rbrace$
\item $on$ : Overlapping vertices $10\%$ of the network size ($n$)
\item $maxk$ : Maximum degree is 50 ($\sqrt n$ for large $n$)
\item $k$ : Average degree is 20  ($15 \cdot om$ for large $n$)
\item $\gamma_1$ : Exponent of power law distribution for degree is 2
\item $\gamma_2$ : Exponent of power law community size distribution is 1
\item $minc$, $maxc$ : Community sizes are picked from [20, 50], or [20, 100] (for small $n$)
\item $minc$, $maxc$ : Community sizes in $[200 \cdot n/100, 500 \cdot n/100]$ (for large network sizes)
\end{itemize}

We compared the performance of our algorithm against current best algorithms: CFinder \cite{palla_uncovering_2005}, oslom \cite{lancichinetti_finding_2011}, copra \cite{gregory_finding_2010}, slpa \cite{xie_slpa:_2011}, svinet \cite{gopalan2013efficient}, Infomap \cite{esquivel2011compression}. The recent algorithms svinet \cite{gopalan2013efficient}, Infomap \cite{esquivel2011compression} did to perform very well on all the synthetic networks considered and hence are not included in our comparison results. All the results are averaged over 10 different LFR benchmark graphs for each case. We evaluate the performance of algorithm \textit{NashOverlap} with that of other algorithms with respect to a standard measure \textit{Normalized Mutual Information} \cite{lancichinetti_detecting_2009}. 

CFinder algorithm \cite{palla_uncovering_2005} is run for each of the values $k = 3, 4, 5, 6, 7, 8, 9, 10$ and the partition that returns the maximum NMI is picked. Svinet \cite{gopalan2013efficient}, Infomap \cite{esquivel2011compression} algorithms are run with parameters set to default values. The results of svinet and Infomap are not shown in the paper as the results are comparably bad to the algorithms considered in this paper. The paper by \cite{xie_overlapping_2013} has shown that the game-theoretic algorithm in \cite{chen2010game} do not detect the overlapping community structure accurately compared to the other algorithms considered in this paper. We ran the comparison against only those algorithms whose code is freely available in their respective author's websites. We ran SLPA algorithm \cite{xie2011slpa} on each value of $r$ = 0.05, 0.10, 0.15, 0.20, 0.25, 0.30, 0.35, 0.40, 0.45 for 10 times, which is a parameter which decides the number of times the algorithm can be repeated to improve its accuracy. We ran the algorithm COPRA, for each $v \in \lbrace 1,\ldots,10 \rbrace$, enabling the $mo$ for 10 times, as suggested by its manual. OSLOM is run with the parameter $r$ set to 10 and the rest of the parameters are set to default values. For the execution of \textit{NashOverlap}, we change the parameter $\alpha$ from $0.30$ to $0.70$, in steps of $0.02$ to detect the best overlapping community structure, for a given network. We picked only the best result for $\alpha \in [0.3, 0.7]$, just like we did for the rest of the algorithms. However, in our algorithm, NMI values do not differ by more than 0.1 when $\alpha=0.5$. 10 replications are performed for each initial condition. Our code is freely available at the following site: \textit{https://github.com/radsine00/cd}.

%------------------------

 \begin{table}
 \small
 \begin{center}
 \begin{tabular} {|l| l| l|  l|  l|  l|}
 \hline 
$om$ & SLPA & COPRA & CFinder & OSLOM & Nash \\ \hline  
 2 & 0.94081&0.99046&0.98891&0.99709&0.988725  \\ \hline 
3 & 0.88409&0.97186&0.95648&0.97162&0.984701 \\ \hline 
4 & 0.83081&0.92413&0.90241&0.93037&0.963998 \\ \hline 
5 & 0.7814&0.85657&0.86218&0.8866&0.939689 \\ \hline 
6 & 0.72565&0.7807&0.82019&0.8421&0.905594 \\ \hline 
7 & 0.67505&0.69129&0.78472&0.79773&0.849367 \\ \hline 
8 & 0.624&0.64776&0.74429&0.75376&0.843787 \\ \hline 
 \end{tabular}
 \end{center}

 \begin{center}
 \begin{tabular} {|l| l| l|  l|  l|  l|}
 \hline 
$om$ & SLPA & COPRA & CFinder & OSLOM & Nash \\ \hline  
2 & 0.94974&0.993&0.99807&0.99927&0.993333 \\ \hline
3 & 0.90199&0.97662&0.99597&0.99347&0.998454\\ \hline
4 & 0.85202&0.93949&0.97434&0.96904&0.992285\\ \hline
5 & 0.81337&0.88667&0.93541&0.93427&0.969991\\ \hline
6 & 0.76999&0.81585&0.90114&0.89681&0.936879\\ \hline
7 & 0.72725&0.74869&0.86722&0.85517&0.933189\\ \hline
8 & 0.68238&0.67474&0.84138&0.81597&0.898582\\ \hline
 \end{tabular}
 \end{center}
 \begin{center}
 \begin{tabular} {|l| l| l|  l|  l|  l|}
 \hline  
 $om$ & SLPA & COPRA & CFinder & OSLOM & Nash \\ \hline  
2 & 0.87739&0.96534&0.74218&0.97932&0.96987  \\ \hline
3 & 0.81609&0.91718&0.70824&0.92384&0.95785 \\ \hline
4 & 0.74911&0.83426&0.70452&0.86404&0.898204 \\ \hline
5 & 0.68484&0.73797&0.70125&0.80304&0.828384 \\ \hline
6 & 0.63765&0.67824&0.67734&0.75269&0.773777 \\ \hline
7 & 0.58215&0.62764&0.65471&0.70534&0.731957 \\ \hline
8 & 0.52835&0.5773&0.60169&0.65807&0.698016 \\ \hline
  \end{tabular}
 \caption{Comparison of algorithms on 5000 vertex networks varying overlapping membership from 2 to 8 and community sizes in the range [20, 50] for $\mu = 0.1$ and $\mu = 0.3$ and $\mu = 0.5$ respectively and $on = 10\%$.}
  \label{tab:5ks}
 \end{center}
  \end{table} 

\subsection{Results}
Due to space constraints, we are unable to show all the experimental results. We provide results for a few settings that are representative of the comparison results observed over all the parameter settings. As the community size increases in comparison to the network size, the fuzziness increases. So, we chose to fix the network size and change the community sizes and show that our algorithm can still detect the community structure accurately.

\textit{NashOverlap} outperformed all the other algorithms with respect to the Normalised Mutual Information (NMI) measure in most experimental settings. The improvement provided increases with increase in the network size for a given community size. While the performance of all the algorithms degrades with increase in overlapping membership, we show that  \textit{NashOverlap} has greater values for NMI compared to the other considered algorithms as shown in the Tables \ref{tab:5kl} and , \ref{tab:5ks}.

The overlapping membership also increases with the decrease in overlap parameter $\alpha$. If $\alpha$ is one, then the algorithm \textit{NashOverlap} detects a disjoint community structure, even when the network has an overlapping community structure. 

We run the algorithm with overlap parameter $\alpha = 1$ on benchmark networks \cite{lancichinetti_benchmark_2008} to evaluate disjoint community detection algorithms. We varied mixing factor from 0.1 to 0.5, and community sizes and its performance is found to be as good as the following disjoint community detection algorithms, CFinder \cite{palla_uncovering_2005}, Louvain \cite{blondel_fast_2008}, Infomap \cite{rosvall_maps_2008}, Infomod \cite{rosvall_information-theoretic_2007} for all the considered networks.

\subsection{Running times}
The number of rounds taken by any game for any network size in the first or the second phase to converge to the nash equilibrium is approximately $n\cdot k_1$ where $k_1$ is a constant and $k_1 \leq 10$ in all the considered networks. In each round, each vertex takes $O(d)$ time to compute its utility where $d$ is its degree and choose the best strategy. Roughly, the total time to compute the overlapping community structure using our algorithm is $\max \lbrace O(k \cdot m), O(n \cdot d_{max} \cdot k_1)\rbrace$, where $k$ is the number of games in the first phase and $d_{max}$ is the maximum degree of a vertex.

On a laptop with Intel core i7 processor of 2.2GHz, the algorithm $NashOverlap$ is run on various network sizes. Table \ref{tab:runningtime} summarizes the running times of the algorithm.

 \begin{table}
 \small
 \begin{center}
 \begin{tabular} {|l| l| l|  l|  l|  l|}
 \hline 
 
network size&1000 & 5000 & 10000 & 100000  & 500000\\ \hline  
running time&1&4&26&404&2402 \\ \hline 
 \end{tabular}
 \end{center}
 \caption{Comparison of the Running times (in seconds) of $NashOverlap$ on benchmark networks for small communities, and $on=10\%$ and fixing average, maximum degree at 20 and 50 respectively and $\mu=0.1$ and $om=4$}
  \label{tab:runningtime}
  \end{table}
  
\subsection{Real World Graphs}
The algorithm is also run on real world graphs like karate(n=34), dolphin(n=115), football(n=62), celegan-neural(n=297), jazz(n=198), email(n=1133), and netscience(n=1589). The community structures detected by our algorithm have comparable modularity \cite{newman_modularity_2006} values to that of those detected by Louvain \cite{blondel_fast_2008} algorithm, one of the best modularity optimization algorithms till date.

The algorithm is also run on large real world networks like DBLP, computer science bibliography database. We considered only the papers from 1966 to 2014 from around 69 premier conferences and formed a collaboration network of ~ 120000 authors and ~ 400000 papers. Weight of the edge between any two researchers in this collaboration network is the number of papers that the two researchers have co-authored together. 

We attempted to detect the collaboration groups in this weighted network using our algorithm. The detected groups are manually checked for their accuracy and are compared against the collaboration groups detected by $COPRA$ and $OSLOM$. 

Our algorithm detected a large collaboration group of size $38\%$ of the network size. The large collaboration group detected for each dataset has the researchers, mostly based in United States, from almost all the fields, and the second largest collaboration group is far smaller in size than the first. The diameter of the largest collaboration group for the large dataset is 9 and the average path length which is the average distance between any pair of researchers in the giant component for the large dataset is 5. Note that, this is because the average number of intermediate researchers that connect any pair of researchers in the giant component is quite less. This finding is in congruence with the statement from the paper \cite{newman2001structure} which says that in all the collaboration networks studied, they identified a giant component which is a considerable fraction of the network size and the second group happened to be far smaller than the first. The existence of a giant component and the subsequent smaller components is due to the percolating regime characteristic of collaboration networks. When the network has a fewer edges, there exist a large number of small connected clusters. With the increase in the number of edges, most of these small clusters group to form a giant component, leaving behind smaller clusters which join the giant component in the future with the increase in the number of connections. This leads to a reasoning that the scientific collaboration network is highly connected and there is a possibility of more interdisciplinary work in the future which is a good sign for the development of science. 

We observe that the communities formed using our algorithm are representative of the research communities (identified based on the conferences attended). Each community detected represents the set of authors with major share of their papers published only in specialized set of conferences. Roughly, in each community, around $90\%$ authors have specialization in a single research area and most of the authors in each community belong to same country/continent. We have detected several communities each of which represents the authors from data mining, multi-media, database systems, theory of computing, cadence and automation, complexity, cryptography, machine learning, logic, artificial intelligence, computer vision, and algorithms and data structures. 

Some of the collaboration groups identified are as shown in the figures \ref{fig:compgeom} and \ref{fig:dbmlai}.
%---------------------
\begin{figure}[ht]
\includegraphics[width=0.8\textwidth]{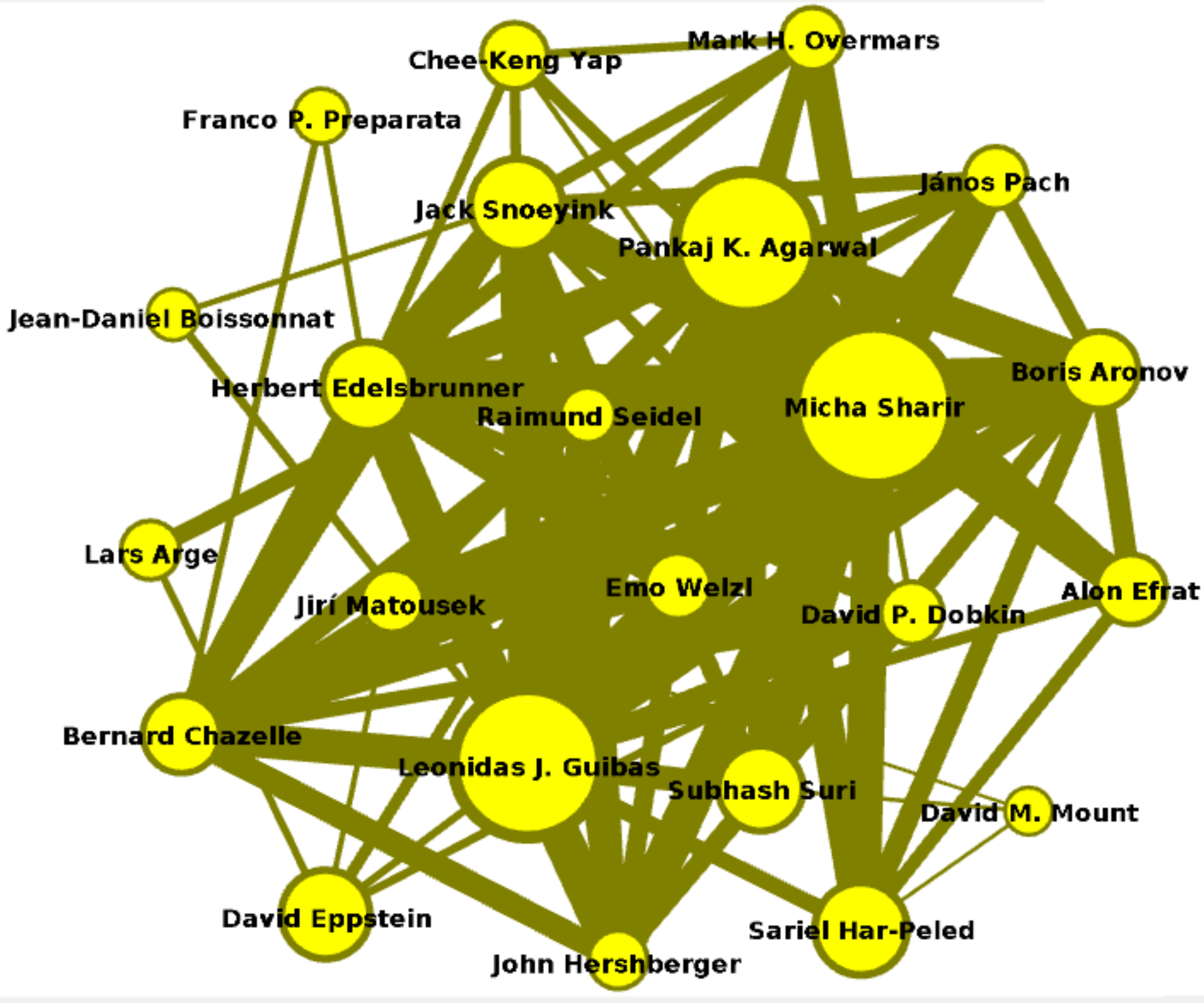}
\caption{Computational geometry collaboration group}
\label{fig:compgeom}
\end{figure}
%---------------------
\begin{figure}[ht]
\includegraphics[width=0.8\textwidth]{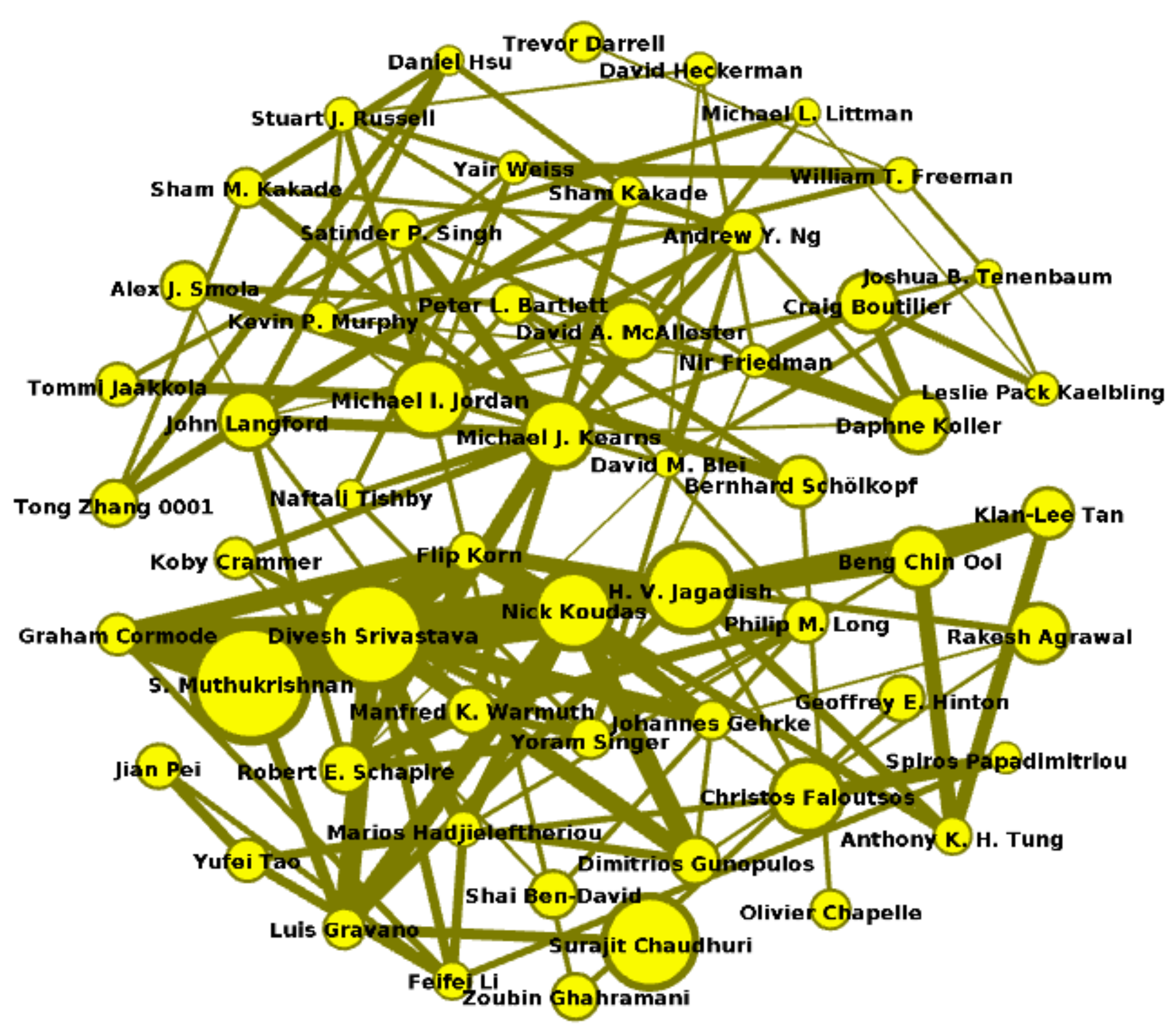}
\caption{Collaboration group on Databases, ML, AI, Pattern Recognition, Big Data.}
\label{fig:dbmlai}
\end{figure}
%---------------------
%\begin{figure}[ht]
%\includegraphics[width=0.4\textwidth]{}
%\caption{Two major collaboration groups on Multimedia, Computer Vision are detected on the small dataset. Both are based in Asia, and have very weaker links with the giant component. Most of the high degree researchers are based in Microsoft research Asia}
%\label{fig:asianmm}
%\end{figure}
%%---------------------

\section{Future Research}
In the near future, we intend to perform dynamic network analysis using the edge-closeness values computed at the end of the first phase. This would entail avoiding the recomputation of the community structure in a dynamic network and modify the edge-closeness values for some deletions and additions of edges.
Our algorithm can also be extended to the directed graphs and bipartite graphs by means of small modifications to the utility functions. However the algorithm is not yet run on the directed and bipartite benchmarks for its evaluation. This forms another scope for future work.
%Our Algorithm currently puts the entire network in memory. 

%\ack We would like to thank the referees for their comments, which helped improve this paper considerably.
\bibliographystyle{plain}
\bibliography{ecai}
\end{document}